\newcommand{\tom}[1]{\textcolor{black}{#1}}
\newcommand{\changed}[1]{\textcolor{black}{#1}}
\begin{document}


\title{REPS: \underline{R}ecycled \underline{E}ntropy \underline{P}acket \underline{S}praying \\ for Adaptive Load Balancing and Failure Mitigation}

\author{Tommaso Bonato}
\affiliation{%
  \institution{ETH Z\"urich}
  \city{}
  \country{}
}
\affiliation{%
  \institution{Microsoft}
  \city{}
  \country{}
}

\author{Abdul Kabbani}
\affiliation{%
  \institution{Microsoft}
  \city{}
  \country{}
}

\author{Ahmad Ghalayini}
\affiliation{%
  \institution{Microsoft}
  \city{}
  \country{}
}

\author{Michael Papamichael}
\authornote{Work done while at Microsoft; the author has since left the company.}
\affiliation{%
  \institution{Microsoft}
  \city{}
  \country{}
}

\author{Mohammad Dohadwala}
\authornotemark[1] 
\affiliation{%
  \institution{Microsoft}
  \city{}
  \country{}
}

\author{Lukas Gianinazzi}
\affiliation{%
  \institution{ETH Z\"urich}
  \city{}
  \country{}
}

\author{Mikhail Khalilov}
\affiliation{%
  \institution{ETH Z\"urich}
  \city{}
  \country{}
}

\author{Elias Achermann}
\affiliation{%
  \institution{ETH Z\"urich}
  \city{}
  \country{}
}

\author{Daniele De Sensi}
\affiliation{%
  \institution{Sapienza University of Rome}
  \city{}
  \country{}
}

\author{Torsten Hoefler}
\affiliation{%
  \institution{ETH Z\"urich}
  \city{}
  \country{}
}
\affiliation{%
  \institution{Microsoft}
  \city{}
  \country{}
}

\renewcommand{\shortauthors}{Bonato et al.}

\begin{abstract}
Next-generation datacenters require highly efficient network load balancing to manage the growing scale of artificial intelligence (AI) training and general datacenter traffic. However, existing Ethernet-based solutions, such as Equal Cost Multi-Path (ECMP) and oblivious packet spraying (OPS), struggle to maintain high network utilization due to both increasing traffic demands and the expanding scale of datacenter topologies, which also exacerbate network failures. To address these limitations, we propose REPS, a lightweight decentralized per-packet adaptive load balancing algorithm designed to optimize network utilization while ensuring rapid recovery from link failures. REPS adapts to network conditions by caching good-performing paths. In case of a network failure, REPS re-routes traffic away from it in less than 100 microseconds. REPS is designed to be deployed with next-generation out-of-order transports, such as Ultra Ethernet, and uses less than 25 bytes of per-connection state regardless of the topology size. We extensively evaluate REPS in large-scale simulations and FPGA-based NICs.
\end{abstract}

\keywords{load balancing; packet spraying; out-of-order transport; RDMA; Ultra Ethernet}

\maketitle
\section{Introduction}



\tom{Modern AI training cluster networks draw on both high-performance computing (HPC) and cloud-native architectures~\cite{convergance, alibaba, facebook}. These systems rely on Remote Direct Memory Access (RDMA) for low-latency, high-throughput communication. InfiniBand, a specialized RDMA transport, is widely used for its performance~\cite{269656}. More recently, operators have adopted commodity Ethernet fabrics, especially RoCEv2~\cite{Roce}, which enable RDMA over standard hardware using off-the-shelf switches and NICs to reduce costs. However, as training clusters scale from around 10,000 endpoints (enough to train GPT-4 or Llama 3–scale models~\cite{achiam2023gpt,dubey2024llama}) to over 100,000 nodes, both interconnects face fundamental limits in sustaining peak bandwidth due to:
}
\begin{enumerate}
\setlength\itemsep{0mm}
\item increased traffic volume and burstiness in collective communication compared to traditional workloads~\cite{li2020pytorch, zhao2023pytorch},
\item the complexity and overhead of managing a lossless, in-order network at scale, especially under link failures and performance degradation.~\cite{facebook}.
\end{enumerate}

Thus, the community recognized the need for network stacks tailored to distributed training traffic while remaining compatible with commodity datacenter Ethernet infrastructure~\cite{facebook}. 
Such proposals include SRD by Amazon \cite{amazon}, Falcon by Google \cite{googleIntroducingFalcon}, TTPoE by Tesla~\cite{ttpoe}, and Ultra Ethernet (UE) \cite{ultra}, developed in collaboration between major tech players. Key open questions in these proposals are how to address \textbf{\textit{load balancing}} and \textbf{\textit{mitigate link failures}}.

Current-generation in-order Ethernet-based training systems (such as RoCEv2-based clusters) typically rely on ECMP \cite{ecmp} or similar mechanisms for decentralized routing and load balancing. ECMP load balancing logic applies a hashing function to the 5-tuple header of each data packet to determine the next hop to take. The benefit of this scheme is that, ignoring link failures, it is unlikely to receive out-of-order packets at the destination NIC as packets belonging to the same connection will be routed through the same network path. 

However, ECMP routing is fragile when different connections get hashed to the same link~\cite{hedera, conga, facebook}. In this scenario, flows can get hashed to the same path even when other paths are free, resulting in congestion and queue build up, which, in turn, can result in drops and go-back-N retransmissions~\cite{ecnsu}.

Moreover, recent works have shown that link failures drastically impact both training times and economic costs. A single link failure can have $\approx 20 \times$ higher cost impact in distributed training workloads than in cloud workloads \cite{alibaba, facebook}. This observation, along with the increasing scale these systems are growing at, highlights the need for a transport layer with a load balancing scheme that can adapt near-instantly, e.g., within a few round-trip times (RTTs), to network failures and, consequently, bandwidth asymmetries across the topology.

\changed{Several solutions have been proposed to overcome ECMP's limitations. MPTCP \cite{mptcp}, PLB \cite{10.1145/3544216.3544226}, FlowBender \cite{flowbender}, Flowlet Switching \cite{201562}, and Flowcell \cite{flowcell} divide flows into subflows or flowlets and then route each one individually. However, these solutions are still designed for in-order networks, making them inherently sensitive to collisions, prone to handling failures poorly, and requiring significant memory for each connection~\cite{10.5555/3307441.3307472}.
Load balancers, such as Oblivious Packet Spraying (OPS)~\cite{6567015} and Multi-Path RDMA (MPRDMA) ~\cite{10.5555/3307441.3307472}, that operate at a per-packet granularity, can mitigate ECMP-based collisions. However, both approaches lack mechanisms to load balance effectively in the presence of network failures. Additionally, MPRDMA is constrained by its limited support for receiving out-of-order (OOO) packets and its requirement for per-packet acknowledgments (ACKs).}

\changed{Our key insight is that the challenges of spraying-based load balancing can be addressed by \textit{adaptive} packet spraying, paired with a transport layer that natively supports \textit{out-of-order} packet delivery, as demonstrated in SRD~\cite{amazon}, UE~\cite{10.5555/3307441.3307472,hoefler2025ultraethernetsdesignprinciples}, Falcon~\cite{googleIntroducingFalcon} and recent research work~\cite{6567015, nvidia, facebook}. Based on this insight, we design and contribute to the Ultra Ethernet Consortium (UEC) our decentralized load-balancing scheme, \underline{\textbf{R}}ecycled \underline{\textbf{E}}ntropy \underline{\textbf{P}}acket \underline{\textbf{S}}praying aka \textit{REPS}. REPS caches "good" network paths in a circular buffer and quickly recovers (within a few RTTs) from network failures by adaptively discovering or freezing network paths. Figure~\ref{fig:reps_scheme} provides a schematic overview of REPS. The 1.0 Ultra Ethernet specification explicitly cites REPS as a reference load-balancing mechanism for Ultra Ethernet Transport (UET)~\cite{ultraethernet2025spec}.} 

\begin{figure}[htbp]
    \centering
    \includegraphics[width=\linewidth]{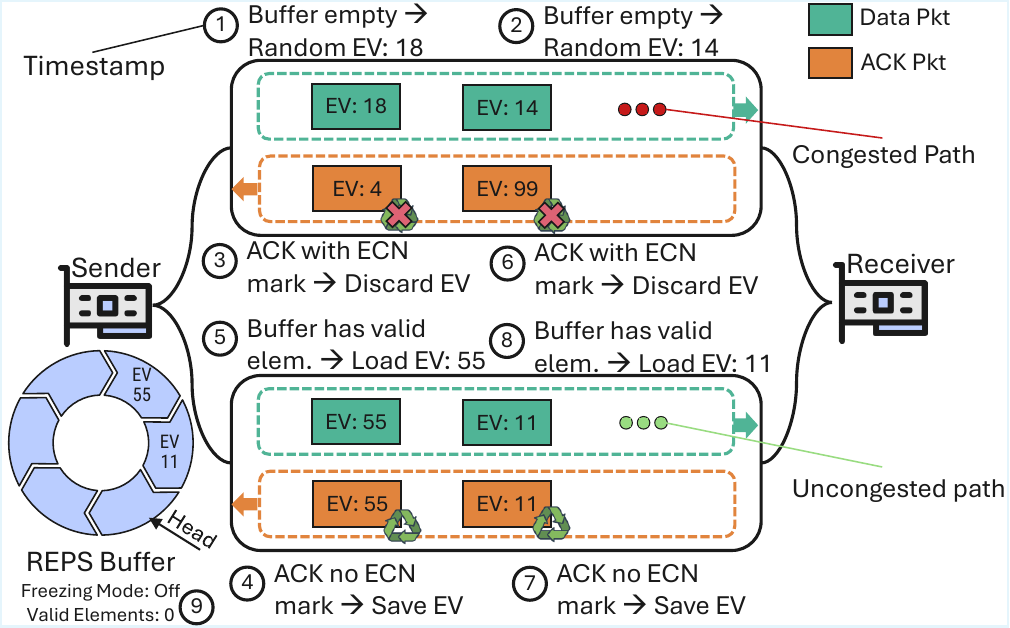}
    \caption{Schematic of REPS. We do not show elements invalidation (for EV 55 and 11) in the circular buffer after usage.}
    \label{fig:reps_scheme}
    \vspace{-1.5em}
\end{figure}

REPS does not require any special hardware support from switches beyond ECMP-based header hashing and ECN, which are standard features in modern switches \cite{ecnsu,spectrum}. REPS requires only $\approx25$ \tom{bytes of state per-connection regardless of the topology size, whereas MPTCP requires 368 extra bytes for 8 sub-flows~\cite{10.5555/3307441.3307472}.}

We extensively evaluate REPS in simulation and augment the simulation findings with real hardware results in Section~\ref{sec:eval_overall}. We deploy REPS in a cluster with modified FPGA-based RDMA-capable NICs. In large-scale network simulations, REPS consistently outperforms state-of-the-art load balancing algorithms. REPS outperforms ECMP and OPS by up to $6\times$ and $1.25\times$ in symmetric networks, and by up to $5\times$ and $2\times$ in asymmetric networks, outperforming OPS by as much as $100\times$ during short-term transient link failures.



\section{REPS Building Blocks}
\changed{In this section, we introduce the fundamental building blocks necessary to understand REPS logic. We first describe key congestion signals, such as ECN \cite{rfc3168} and packet loss. We then introduce essential terms and concepts related to load balancing, including ECMP and EV. \tom{These concepts are largely established, but we revisit them here to provide clear terminology and a basis for the discussion that follows.}}

\subsection{Congestion Signals}\label{sec:cong_signals}

\begin{description}[leftmargin=0pt]
\setlength\itemsep{0.25em}


\item[Explicit Congestion Notification (ECN) marking] allows switches to notify congestion by setting a bit in the traffic class field of the IP header. The receiver then sends back this marked ECN bit to the sender in its ACK packet header. The sender can choose to react to this congestion signal by adjusting its sending rate~\cite{zhu2015congestion,10.1145/1851275.1851192,8847013}. Switches can employ various strategies for marking packets. For instance, in \textit{Random Early Detection} (RED) \cite{251892}, switches probabilistically mark packets based on queue size, with the marking probability increasing linearly between two thresholds ($K_{min}$ and $K_{max}$). 

In REPS, we use ECN as the congestion signal to detect path congestion due to its simplicity and widespread adoption \cite{ecnsu,spectrum}. Since ECN is not marked for packets when the queue is smaller than $K_{min}$, ECN effectively filters out cases of minor queuing due to packet collisions across multiple hops, while identifying true congestion at a single bottleneck. In contrast, delay-based signals struggle to differentiate between these scenarios unless enhanced by advanced switch features, such as in-network telemetry (INT) \cite{52672}.

\item[Packet loss] has long been a key indicator of severe congestion in networks~\cite{251892,10.1145/2208917.2209336}. However, using packet losses as the sole signal for congestion detection can result in delayed responses, as losses typically indicate a point of significant congestion. Packet loss detection, often based on timeouts, can be challenging to calibrate and may lead to unnecessary retransmissions.
We categorize packet losses in two categories: losses because of congestion and losses due to networking failures. To differentiate the two types of losses and to improve reaction times, packet trimming, which only triggers for congestion drops, can be employed ~\cite{adrian2022implementing, 179729,10.1145/3098822.3098825,278348, trimming_rdma}. Trimming can be implemented on many existing switches~\cite{adrian2022implementing} and is starting to be supported by some switch vendors. It is also being pushed as a UE protocol feature \cite{ultraethernetUltraEthernet}. On current generation networks, REPS can be deployed with timeouts, and in next-generation augmented with trimming to distinguish between loss types (more details in Appendix~\ref{appendix:freezing_details}).


\item[Congestion Control (CC) algorithms] rely on congestion signals, such as ECN marks or packet loss, to adjust a flow's sending rate or window with the goal of maintaining high network utilization while preventing queue build up. REPS is designed to work well with any CC algorithm as long as they support receiving and acknowledging packets out-of-order for a given message. 
In particular, we show later how REPS works well with EQDS, a variant of DCTCP, and a proprietary internal CC algorithm \cite{278348, 10.1145/1851275.1851192}.

\end{description}
\subsection{Load Balancing}

\begin{description}[leftmargin=0pt]
\setlength\itemsep{0.25em}

\item[Equal Cost Multi-Path (ECMP)] is one of the most commonly used and simple load balancing mechanisms. It works by using a hashing function to randomly choose one of the available paths for a given packet \cite{ecmp}. The hashing function usually takes as input 5 elements (aka \textit{five tuple}) from the packet header: the protocol number, source address, destination address, source port, and destination port (some variations utilize only the four tuple without the protocol number). Recent approaches have proposed incorporating additional fields, such as the Time-to-Live (TTL) or the Flow Label (in IPv6), to further refine the hash calculation \cite{flowbender, 10.1145/3544216.3544226}.

Under normal circumstances, all packets belonging to the same flow are assigned statically to a given path since the hashing function will use the same values as input. This assignment is done statically and ignores the current network congestion and failure conditions. As a result, two or more flows might be assigned to the same path even if there are many more paths available. This will inevitably result in heavy congestion and, possibly, packet drops as a consequence. Such ECMP \textit{hash collisions} are a well-documented limitation of standard ECMP~\cite{facebook,conga,ecmp_improved}.

\item[Entropy Value (EV)] is a value in the packet header that can be configured to be an input to the ECMP hashing function in switches. This value, which is set by the sender, helps to determine a packet's path through the network. Possible header fields to be used as an EV are the Source Port field in the packet header \cite{10.5555/3307441.3307472} or the Flow Label field in IPv6 \cite{10.1145/3544216.3544226}. \tom{This is consistent with previous load balancing works and has already been demonstrated in practice \cite{flowbender, 10.1145/3544216.3544226, 201562, 10.5555/3307441.3307472}. REPS leverages entropy values (EVs) to improve load balancing and mitigate ECMP collision limitations, without requiring knowledge of the exact mapping from a packet's EV to its resulting path. Switch hash functions can produce collisions: packets with distinct EVs, with all other header fields held constant (e.g., source and destination IPs), may still traverse the same physical path. The sender, including REPS, cannot directly infer whether changing the EV will change the path. Nevertheless, with a well-designed hash function, the induced path distribution should closely approximate a uniform random assignment, yielding near-optimal spreading.
} 

\item[Entropy Values Set (EVS)] is a fixed-size set of EVs, since the number of possible values is constrained by the number of bits they can occupy in the packet header. For example, the source port field in a UDP header is assigned 16 bits, giving the EVS a size of 65536 possible values (excluding some reserved values) \cite{10.5555/3307441.3307472}. While different numbers of bits could be allocated for the EVS, we analyze in Section~\ref{sec:evs_size} how many are required for optimal performance. It is generally advantageous for an algorithm to achieve good load balancing performance with a \textit{small} EVS size since that often reduces the algorithm's memory overhead. \tom{Note that the minimum effective EVS size also depends on the network topology: smaller topologies require fewer EVs to fully explore all available physical paths.}

\item[Oblivious Packet Spraying (OPS),] also known as Random Packet Spraying (RPS)~\cite{6567015}, randomly distributes individual packets across all available paths between a sender and a receiver. This is done by selecting a random EV for every packet at the sending host or by choosing a random output port at the switches. OPS has the advantage of distributing traffic evenly across multiple paths, addressing most ECMP issues. However, it is unaware of asymmetries or failures and can still be sub-optimal even in a perfectly symmetrical network (Section~\ref{sec:eval:nonos}).

\end{description}



\section{REPS}\label{sec:reps}\changed{
\underline{\textbf{R}}ecycled \underline{\textbf{E}}ntropy \underline{\textbf{P}}acket \underline{\textbf{S}}praying aka \textit{REPS} is a load balancing algorithm that relies on simple and memory-efficient endpoint logic. The key idea behind REPS is straightforward: when congestion is detected on a certain path, we explore alternative paths while caching and reusing paths with little to no congestion. Specifically, REPS uses a circular buffer of a fixed size to cache EVs of uncongested paths. \tom{By design, REPS can be implemented in NIC hardware or firmware with minimal memory/area footprint. REPS does not need any switch support besides ECMP hashing and ECN marking. Placing REPS at the endpoint not only simplifies deployment but also provides a broader, end-to-end view of the network, unlike fabric-side approaches which are limited to local visibility.} Algorithm~\ref{alg:update_entropy} details pseudocode for a REPS sender receiving an ACK and detecting a failure. Algorithm~\ref{alg:on_send} describes pseudocode for a REPS sender sending a data packet.}

\subsection{Core Logic: Path Exploration and Reuse} \label{sec:reps_explore}

\tom{During the first Bandwidth-Delay Product (BDP) worth of packets of a new or idle connection, a REPS sender \textit{explores} random entropies from the EVS. This exploration is necessary because, initially, there is no fresh knowledge about the network's state. In this warm up phase, REPS operates similarly to OPS.} 

Upon receiving a data packet, the receiver copies the EV from the received packet into the acknowledgement (ACK) packet, forwarding it back to the sender. More specifically, ACKs can use that same EV for their own header instead of using a new header field, eliminating the need for extra header space and for any changes to the packet wire format. 

When an ACK arrives at the sender, if it is not ECN-marked, the EV it carries is cached in the circular buffer, and its validity bit is set to 1. Otherwise, if the ACK is ECN-marked, REPS does not cache the EV and discards it. When set, the validity bit indicates that an entropy has not been used after it has been added to the circular buffer.
When sending a data packet out, REPS first checks if there are any valid EVs in its buffer. In case there is any valid EV, REPS \textit{reuses} the oldest valid EV from the circular buffer and resets its validity bit. Otherwise, REPS \textit{explores} a random EV from the EVS.

The circular buffer in REPS ensures that bursts of back-to-back ACKs with "good" entropies are correctly cached and reused. Moreover, it guarantees stable load balancing in the case of failures as shown in Section~\ref{sec:reps_freezing}. We use a circular buffer of 8 elements based on empirical evidence and the bounds derived from Theorem~\ref{thm:convergence}.

\subsection{Failure Mitigation: Freezing Mode} \label{sec:reps_freezing}

\changed{When a network, running without REPS, experiences any kind of transient (e.g., link flap) or persistent unrecoverable failure (e.g., a link or switch failure), it will take the system some time to recover from it: ranging from several milliseconds to update the ECMP routing group to several seconds if a reboot is needed, and more if a swap is needed \cite{ecmp_recover1, ecmp_rec2}.} 

If we assume that it takes $10$ ms to exclude a failed cable from a routing group, packets will still be routed to this failing group during this transient period, resulting in packet drops. Specifically, with a 4 KiB MTU and a 400 Gbps link, this could potentially result in over 120,000 packets (approximately 0.5 GB) being lost (ignoring congestion control). This becomes even more critical in the case of other failures where it takes longer to update the routing.

\tom{By default, REPS \textit{Core Logic} would already limit utilizing failing paths since it would only recycle paths that are still active and returning ACKs without ECN. However, if the REPS buffer was empty, it would select a random EV that could potentially map to a failing path. To further improve the performance of REPS and prevent further losses, we introduce \emph{freezing mode}. REPS enters this mode when detecting failures via indirect feedback from the network.} REPS uses a timeout heuristic (Section~\ref{sec:cong_signals}) that can be enhanced with packet trimming (Appendix~\ref{appendix:freezing_details}) as a natural feedback from the network to detect failures along a path. 

When in \textit{freezing mode}, REPS:
\begin{enumerate}[noitemsep, topsep=0pt, partopsep=0pt]
\item avoids exploring new EVs at random since this could result in the hashing function picking a failing path,
\item reuses the elements that are currently in the circular buffer even if they might be invalid. 
\end{enumerate}

\changed{While this strategy could result in slightly worse load balancing (due to potentially reusing the same EV several times), it comes with the major benefit of guaranteeing that REPS will almost never pick the failing path again since the recent received EVs point to healthy paths.
Considering the example above: by enabling freezing mode, the number of drops decreases from over 120K packets to only about 1K.}

\changed{\tom{To decide when to exit freezing mode, we either send probe packets to check the status of the failing paths or, if probing is not available, we simply exit freezing mode after a fixed amount of time. Once we exit freezing mode, we occasionally use random EVs to allow REPS to explore new paths and assess whether we detect new packets failing or not. This prevents REPS from getting stuck in a suboptimal state if the EVs in the buffer were all pointing to a failing path, a rare scenario that can theoretically happen with properly timed back-to-back network failures. Moreover, if REPS exits freezing mode before the issue is fully resolved, it will simply re-enter the mode shortly afterward with minimal impact on performance and a very small impact on extra packet drops (as we will show in Section~\ref{sec:eval:fail}).}}

The intuition behind freezing mode is that once we suspect there is a failure, we want to start avoiding it as soon as possible. Interestingly, we observe that even if we enter freezing mode unnecessarily (i.e., by mistaking a congestion drop for a network failure), REPS would still load balance well, as discussed in Section~\ref{sec:evs_size} and Appendix~\ref{appendix:freezing_details}. This observation means that even if there is doubt about whether a real failure occurred, REPS can be conservative and can safely enter freezing mode. Finally, we note that freezing mode is not crucial to the inner working of REPS, as it can work also without it. It merely adds more guarantees on top of it and improves performance in some failure cases. We showcase the impact of running REPS without freezing mode in Appendix~\ref{appendix:additional_results:freezing}.

\begin{algorithm}[ht]
\footnotesize
    \captionsetup{font=small}
    \caption{\small REPS logic upon ACK receive and failure detection.}
    \label{alg:update_entropy}
    \begin{algorithmic}[1]
        \State $\mathit{repsBuffer} = []$ \Comment{State variables.}
        \State $\mathit{isFreezingMode} = false$  
        \State $\mathit{head, numberOfValidEVs, exploreCounter} = 0$  
        \State

        \Procedure{onAck}{ackPacket} \Comment{\textit{Sec.~\ref{sec:reps_explore}}}
            \If{\textit{ackPacket.ecn is set}}
                \State \Return
            \EndIf
            \If{\textit{not repsBuffer[head].isValid}}
                \State $\mathit{numberOfValidEVs}++$
            \EndIf
            \State $\mathit{repsBuffer[head].cachedEV} = \mathit{ackPacket.ev}$
            \State $\mathit{repsBuffer[head].isValid} = \textit{true}$
            \State $\mathit{head} = (\mathit{head} + 1) \% \mathit{REPS\_BUFFER\_SIZE}$
            \If{\textit{isFreezingMode} and \textit{now()} > \textit{exitFreezingMode}} \Comment{\textit{Sec.~\ref{sec:reps_freezing}}}
                \State $\mathit{isFreezingMode} = \textit{false}$
                \State $\mathit{exploreCounter} = \textit{NUM\_PKTS\_CWND}$
            \EndIf
        \EndProcedure

        \State 

        \Procedure{onFailureDetection()}{} \Comment{\textit{Sec.~\ref{sec:reps_freezing}}}
            \If{\textit{not isFreezingMode} and \textit{exploreCounter} == 0}
                \State $\mathit{isFreezingMode} = \textit{true}$ 
                \State $\mathit{exitFreezingMode} = \textit{now()} + \textit{FREEZING\_TIMEOUT}$ 
            \EndIf
        \EndProcedure
    \end{algorithmic}
\end{algorithm}

\begin{algorithm}[ht]
\footnotesize
    \captionsetup{font=small}
    \caption{\small REPS logic on send datapath.}
    \label{alg:on_send}
    \begin{algorithmic}[1]
        \State
        \Comment{Variables already listed in Algorithm~\ref{alg:update_entropy}}
        \Procedure{getNextEV()}{}
            \If{$\mathit{numberOfValidEVs} > 0$}
                \State $\mathit{offset} = (\mathit{head} - \mathit{numberOfValidEVs}) \% \mathit{REPS\_BUFFER\_SIZE}$
                \State $\mathit{repsBuffer[offset].isValid} = \textit{false}$
                \State $\mathit{numberOfValidEVs} = \mathit{numberOfValidEVs} - 1$
            \Else \Comment{Must be in freezing mode.}
                \State $\mathit{offset} = \mathit{head}$
                \State $\mathit{head} = (\mathit{head} + 1) \% \mathit{REPS\_BUFFER\_SIZE}$
            \EndIf
            \State \Return $\mathit{repsBuffer[offset].cachedEV}$
        \EndProcedure

        \State

        \Procedure{onSend}{dataPacket}
            \If{$\mathit{exploreCounter} > 0$}
                \If{$(\mathit{--exploreCounter} \% \mathit{REPS\_BUFFER\_SIZE}) == 0$}
                    \State $dataPacket.ev = \textit{rand()} \bmod \mathit{EVS\_SIZE}$
                \EndIf
            \ElsIf{\textit{repsBuffer.isEmpty()} \textbf{ or } \big(\textit{numberOfValidEVs} == 0 \textbf{ and } \textit{not isFreezingMode}\big)}
                \State $\mathit{dataPacket.ev}$ = \textit{rand()} \% $\mathit{EVS\_SIZE}$
            \Else
                \State $\mathit{dataPacket.ev}$ = \textit{getNextEV()}
            \EndIf
        \EndProcedure
    \end{algorithmic}
    \vspace{-0.4em}
\end{algorithm}

%


\subsection{REPS Design Advantages}
\begin{description}[leftmargin=0pt]
\setlength\itemsep{1mm}

\item[Simple and versatile algorithm:] \changed{The simplicity of REPS enables cost-efficient hardware support, as it does not require any change of the packet headers format or existing network components. Moreover, its code is short and easy to implement and understand. REPS works best with per-packet ACKs, but we show in Section~\ref{sec:ack_ratio} that it still performs well even with ACK coalescing.} While we focus on fat tree topologies in this paper, we believe REPS could work well even with different topologies \cite{dragonfly,10.5555/3571885.3571899} with little to no adjustments required. Moreover, we expect REPS to work well even when using source-based routing algorithms where the NIC chooses directly the paths to use instead of relying on ECMP-hashing in switches. In this case, REPS would store the actual path ID for its EVs.
 
\item[Minimal NIC memory footprint:]  A key advantage of REPS is that it does not need to track per-EV metrics and statistics. As will be discussed in Section~\ref{sec:evs_size}, achieving good performance with OPS requires a relatively large EVS. If OPS were to maintain metrics for each EV, the memory overhead would be excessive for a hardware NIC implementation, e.g., 64 Kib to store 1 bit per entropy value for an EVS with 64K EVs. However, REPS only needs a fixed number of bytes in memory. More specifically, as detailed in Table~\ref{tab:memory_footprint}, REPS requires only around $\approx25$ bytes. Moreover, even when constrained to a small EVS, REPS is still able to perform well (Section~\ref{sec:evs_size}), which can further reduce REPS' memory footprint by 1 byte since Table~\ref{tab:memory_footprint} assumed 16 bits per EV. 

There is a subtle observation as to why REPS can achieve a great performance without needing a lot of state: while the REPS buffer is used to cache good entropies, it is really only useful in certain scenarios like when receiving a burst of ACKs or during freezing mode. In reality, most of REPS' state is on the wire, stored in the inflight data and ACK packets, which will inform REPS about the good paths in the network.

\begin{table}[ht]
\centering
\scriptsize 
\begin{tabular}{p{0.6\linewidth} p{0.2\linewidth}}
    \toprule
    \textbf{Component} & \textbf{Footprint (bits)} \\
    \midrule
    \multicolumn{2}{l}{\textbf{Circular Buffer Element (× elements in buffer):}} \\
    \quad Entropy Value \ (\textit{cachedEV}) & 16 \\
    \quad Entropy Validity Bit \ (\textit{isValid}) & 1 \\
    \midrule
    \multicolumn{2}{l}{\textbf{Global Variables:}} \\
    \quad Head Buffer \ (\textit{head}) & 8 \\
    \quad Number Valid Entropies \ (\textit{numberOfValidEVs}) & 8 \\
    \quad Exit Freezing Time \ (\textit{exitFreezingMode}) & 32 \\
    \quad Is Freezing Mode \ (\textit{isFreezingMode}) & 1 \\
    \quad Explore Counter \ (\textit{exploreCounter}) & 8 \\
    \midrule
    \textbf{Total (1 element in buffer)} & \textbf{74 $\approx$10 bytes} \\
    \textbf{Total (8 elements in buffer)} & \textbf{193 $\approx$25 bytes} \\
    \bottomrule
\end{tabular}
\caption{Per-connection memory footprint of REPS.}
\label{tab:memory_footprint}
\end{table}

\item[Quick failure mitigation:] The general approach of REPS is that it only keeps track of good paths and avoids keeping statistics on congested or failing paths. This approach enables it to promptly load balance away from a congested link or failing link as, especially for the latter, it is never going to take a random guess once a link is failing. Any alternative method that tries to avoid selecting a failing path by tracking bad EVs would need to keep records of all the EVs that map to that path for a given flow, which would involve tracking not only the failing EVs but also all those still in flight, resulting in excessive NIC's memory usage.

\end{description}


\section{Evaluation}\label{sec:eval_overall}
\tom{Our evaluation includes both large-scale simulations designed to stress test REPS under a variety of workloads, and experiments on real hardware at meaningful scale. The goal is to answer the following research questions:
\begin{itemize}
\setlength\itemsep{0mm}
    \item Does REPS outperform state-of-the-art load balancers in a healthy network with a symmetric topology? (Sections~\ref{sec:eval:nonos} and \ref{sec:hw_eval:symm})
    \item How does REPS perform in the presence of topology asymmetries or when coexisting with non-REPS traffic? (Sections~\ref{sec:eval:reps} and \ref{sec:hw_eval:asymm})
    \item Can REPS recover quickly from failures? Is it resilient to severe or widespread failures? (Sections~\ref{sec:eval:fail} and \ref{sec:hw_eval:fail})
    \item Does REPS maintain performance under varying network parameters and configurations? (Section~\ref{sec:reps_flex})
\end{itemize}}

\subsection{Evaluation Setup}
\begin{description}[leftmargin=0pt]
\setlength\itemsep{0mm}

\item[Baseline load balancers:] In large-scale simulations, we compare REPS with ECMP~\cite{ecmp}, OPS~\cite{6567015}, PLB~\cite{10.1145/3544216.3544226}, MPRDMA \cite{10.5555/3307441.3307472}, Flowlet Switching~\cite{201562}, an MPTCP-like algorithm~\cite{mptcp}, a bitmap approach where we keep per EV statistics similarly to STrack~\cite{le2024strackreliablemultipathtransport}, and adaptive RoCE by NVIDIA~\cite{nvidia}. \tom{We note that this list includes both sender-based approaches and in-network switch-based approaches}. We configure PLB to have more aggressive parameters similarly to FlowBender to improve its performance \cite{flowbender}. For Flowlet Switching we set an aggressive flowlet timeout at half of the RTT. For MPTCP, we divide each message into 8 subflows and route each one individually similarly to using multiple QPs (Queue Pairs)~\cite{facebook}.

\item[NIC congestion control:] In all simulated baseline runs, we use the same DCTCP \cite{10.1145/1851275.1851192} variant used in MPRDMA \cite{10.5555/3307441.3307472}. It applies per-ACK congestion window updates, allows the receiver to accept and acknowledge out-of-order packets, and reduces the congestion window by one MTU in case of packet drops. In the FPGA-based experimentation, we use a similar but proprietary CC algorithm that relies on ECN marking, congestion notification packets, and per-flow congestion window adjustments.


\item[Network setup:]
\tom{Regardless of the workloads that we discuss in Section~\ref{sec:workloads}, in the evaluation we simulate 3 different scenarios: (1) healthy symmetric network conditions, (2) asymmetric network conditions, which can be due to asymmetric path bandwidth or due to ECMP-hashed background traffic that increase load on specific paths (a scenario that can happen in incremental deployments of REPS), and (3) a network encountering various failure modes. We focus on the most relevant failure scenarios for real-world deployments~\cite{facebook} and we report the remaining ones in Appendix~\ref{appendix:additional_results}.}

\item[Simulation model:] \changed{We implement REPS by extending the \textit{htsim} packet-level network simulator~\cite{10.1145/3098822.3098825}. Our simulations consider different fat-tree topologies with 1024 nodes and 128 nodes and with different levels of oversubscription ranging from 1:1 (no oversubscription) to 4:1. We test 2- and 3-tier fat trees (TOR or Top-of-rack as \textit{T0}, Aggregate as \textit{T1} and Core \textit{T2}). Such topologies are commonly deployed in production datacenters designed for distributed training~\cite{43837, facebook}. We use the larger topologies for the microbenchmark runs while the smaller topology for running the datacenter traces and AI collectives.}

We reflect the specifications of current-generation switches in simulation parameters: a 4 KiB MTU size, a bandwidth of 400 Gbps, and a switch traversal latency of 500 ns~\cite{toma5,9355230}. We assume uniform link lengths and latencies, with each link exhibiting a latency of 500 ns. We set the retransmission timeout (RTO) to $70\;\mu\text{s}$ which is the amount of time it takes to traverse every queue in the network if it was full plus the network-wide RTT.
For each queue $K_{min}$ is set to 20\% of the queue size (one BDP) and $K_{max}$ to 80\% of it.

\item[REPS-FPGA:] \tom{To demonstrate the effectiveness and resilience of entropy recycling in a real network environment, we also evaluate REPS in an end-to-end setting. We use a modified production-grade FPGA-based RDMA-capable NIC running a custom transport protocol capable of handling OOO packets by directly placing the payload in memory via RDMA. The transport implementation relies on SACKs that carry bitmaps (256-bit-wide in these experiments) relaying packet delivery updates to track completions and handle selective retransmissions. Each connection maintains an 8-entry-deep REPS buffer that stores UDP source ports (2B). Our experimental setup is configured to support up to 256 connections for a total memory footprint of 4KB. Given that only a single connection is active at a time, all REPS buffers can be hosted in a single SRAM memory and the associated logic to manage REPS buffers can be multiplexed across all connections, occupying negligible resources (<0.04\% of the total device logic resources).}

The testbed consists of a two-tier fat-tree Ethernet network with 100G NICs and 12.8T switches. The default MTU for our FPGA NICs is 8KB and typical RTT incorporating NIC buffer delay and ACK processing through T0 and T1 are in the order of 10 and 15 us, respectively.

\end{description}

\subsection{Workloads}\label{sec:workloads}
We evaluate REPS on a mix of synthetic benchmarks, real datacenter traces, and distributed training collectives. 
\begin{description}[leftmargin=0pt]
\setlength\itemsep{0mm}
\item[Synthetic benchmarks set] consists of (1) \textit{incast}, (2) \textit{permutation}, and (3) \textit{tornado} traffic patterns. Incast happens when multiple senders simultaneously send to one receiver. It is very common in storage workloads \cite{storage1, storage2} but also, with a small incast degree, in distributed training \cite{facebook}. In the permutation pattern, each node sends to a random receiver, and we ensure that each node is sending and receiving to exactly one node \cite{10.1145/1851275.1851192}. The tornado pattern is a special case of the permutation where each node sends to its "twin" node in the other half of the tree. For example, with 128 nodes, node 0 would send to 64 and vice-versa, node 1 to 65 and so on. Tornado is an important worst case for load balancing, as each packet is required to traverse the full tree \cite{tornado}.

\item[Datacenter traces:] We use real datacenter traces from similar previous work \cite{10.1145/1851275.1851192, hermes}. We use a series of traces used for web search in production clusters. In such distribution the majority of flows are quite small (less than 100 KB) while a small number of flows are large. For each node we select randomly the receiver. \changed{More details are available in Appendix~\ref{appendix:additional_data}.}

\item[AI collectives:] We show simulated results for two commonly used collectives in AI training: the Allreduce implemented via the ring and butterfly algorithm~\cite{li2020pytorch}, the Alltoall implemented using an algorithm where we limit the number of parallel connections per node ($n$ connections) \cite{10.5555/3571885.3571899, naumov2019deep}.

Our baseline traffic for REPS-FPGA consists of 128 endpoints under two T0 switches continuously performing 4 MB ring-based AllReduce collective operations, with the logical ring laid out such that all connections traverse the T1 spine to maximize the pressure on the spine of the topology. 

\end{description}

\subsection{Simulation Results}~\label{sec:simulation_results}
We conduct a detailed analysis of REPS behavior for each network condition (see Section~\ref{sec:eval_overall}). We first examine a specific case in depth and then summarize key takeaways.
\subsubsection{Healthy Symmetric Network Conditions}\label{sec:eval:nonos}
In this Section we evaluate the performance of REPS in a simple setting where there is no oversubscription and there are no failures, meaning the network is perfectly symmetrical. Intuitively, this seems the best situation for oblivious packet spraying since evenly splitting the packets across multiple links should result in the best performance. \changed{However, as we will see later in this section and based on our theoretical model in Section~\ref{sec:theo_mot}, this is not the case as REPS still offers an up to 25\% advantage over OPS. This is because of ECMP collisions that still happen with OPS. While over a long period of time, each link will be evenly used, there will still be short-term collisions happening that will increase and decrease the link utilization of certain links.} 

\begin{description}[leftmargin=0pt]
\setlength\itemsep{0.25em}

\begin{figure}[htbp]
    \centering
    \includegraphics[width=\linewidth]{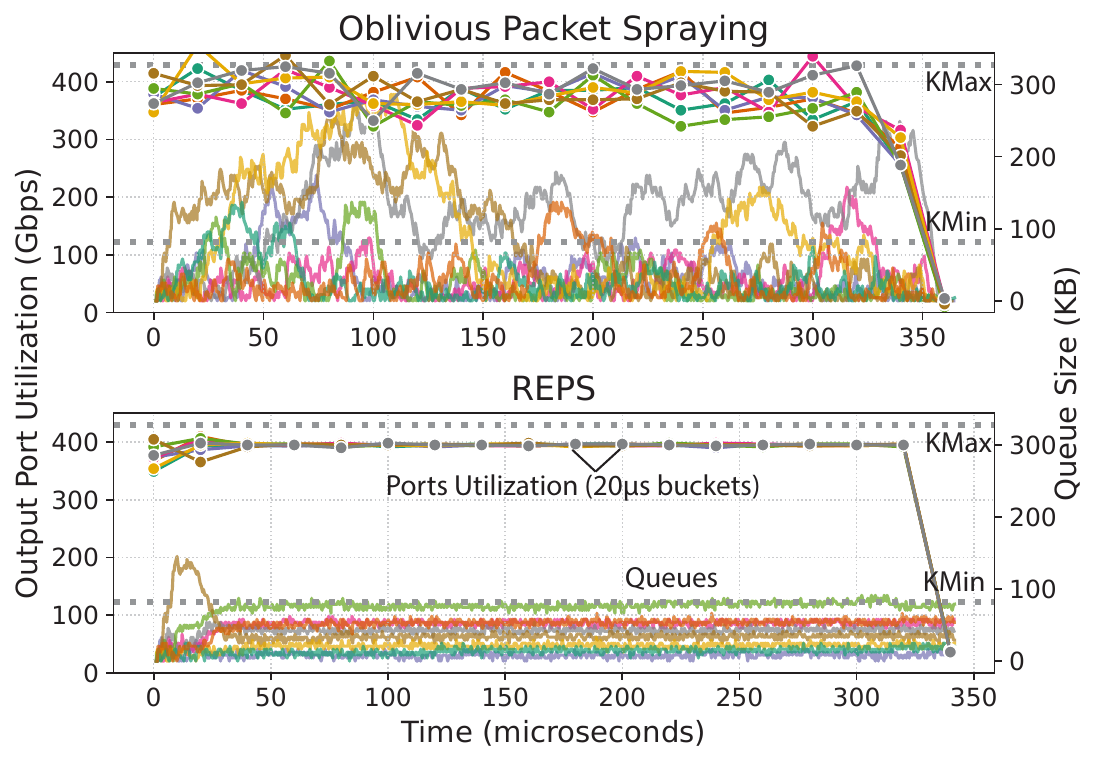}
    \caption{A tornado workload with 16 MiB messages using OPS (top) and REPS (bottom) as load balancers.}
    \label{fig:nocc_symm}
    \vspace{-1em}
\end{figure}

\begin{figure*}[h]
    \centering
    \includegraphics[width=\textwidth]{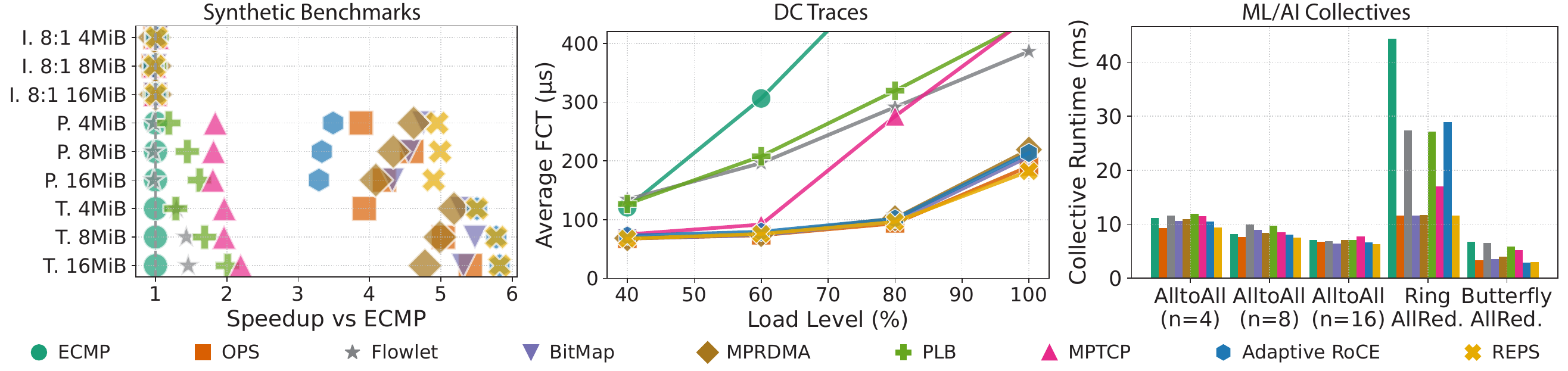}
    \caption{REPS performance in synthetic benchmarks (I.=Incast, P.=Permutation, T.=Tornado), DC traces and AI collectives.}
    \label{fig:macro_symm}
    \vspace{-1em}
\end{figure*}

\begin{figure}[t]
    \centering
    \includegraphics[width=\linewidth]{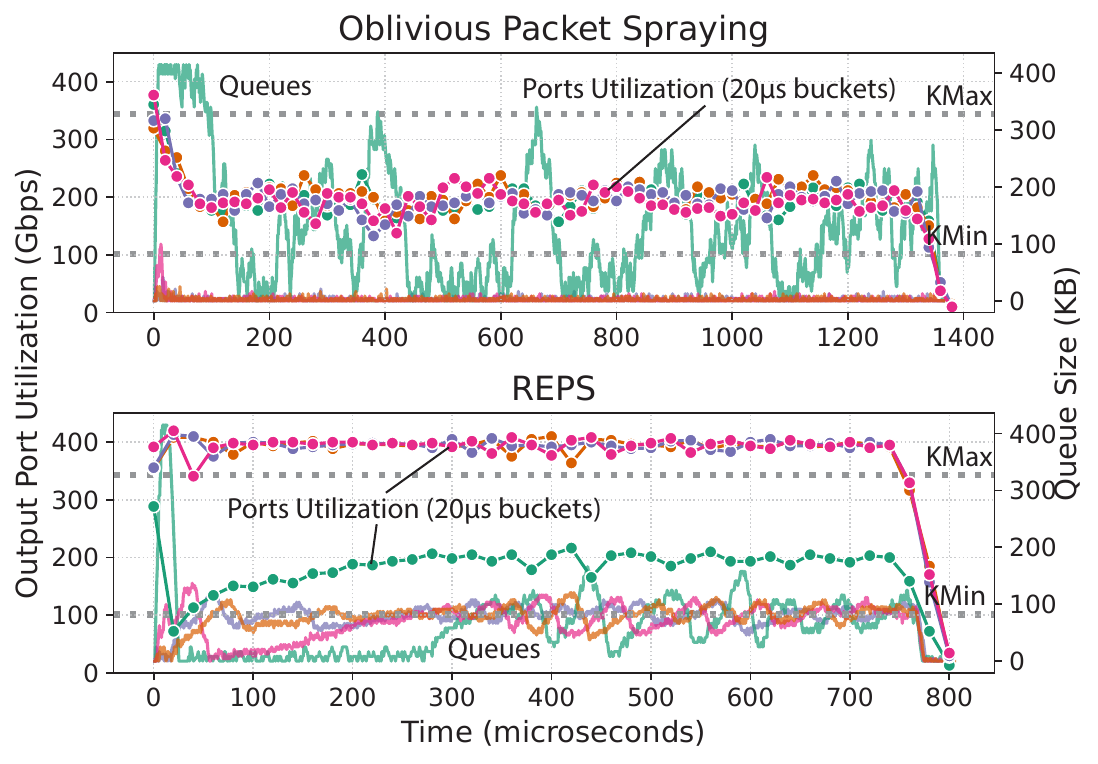}
    \caption{REPS vs. OPS in a 32 MiB message send with an asymmetric topology.}
    \label{fig:micro_symm}
    \vspace{-1em}
\end{figure}

\item[Microscopic analysis:]
We study this effect in a tornado pattern, which, theoretically, can be managed entirely by optimal load balancing. We study what happens at a TOR switch and register the link utilization of the uplinks both over the entire simulation and also at smaller time buckets. For visualization purposes we limit these runs to a 2-tier network where each switch has 8 uplinks. However, we note that this problem is present, to an even bigger degree, when using a larger number of uplinks as explained in Section~\ref{sec:theo_mot}.

In Figure~\ref{fig:nocc_symm} we visualize statistics for a single T0 switch during the workload run. In particular we show two key metrics over time: 1) on the left Y-axis we show the \textit{output port utilization} at fixed time intervals of 20 us. If it goes above 400 Gbps it means that during the studied time bucket some queueing was created. If it goes below it means the output port was slightly under-utilized. 2) On the right Y-axis we show the \textit{queue size} over time of the 8 output ports.

In the case of OPS, we can see that due to the random nature of it, queues are created over time and that the link utilization of each port, at small timeframes, sometimes goes significantly above and below the optimal point (15\% more or less). This shows that while OPS does still a decent job at completing the workload close to the ideal completion time, it does inevitably create unpredictable queues (potentially even exceeding $K_{max}$ and causing drops) over short time periods. Such queues will cause the CC to kick in and slightly reduce the sending rate and average output port utilization.

In the same configuration with REPS, we notice a major difference in the bottom plot of Figure~\ref{fig:nocc_symm}. In particular REPS converges quickly to a configuration where each queue is kept below $K_{min}$ (note that the only guarantee here is that all queues will be below $K_{min}$, not necessarily all at the same value). At the same time, we can also notice that all the ports converge to the perfect selection rate of 400 Gbps. While the overall completion time is only about 4\% better than OPS, the smaller queues provide a better guarantees for system low latency traffic. Moreover, as we can see in Figure~\ref{fig:macro_symm}, this gap expands as we increase the message size.

Looking at the port selection rate over the entire run of the workload for OPS vs. REPS, we observe that they are nearly equivalent. This is, again, because the problem is with the short-term collisions that OPS can experience at microscopic scale.

\item[Macroscopic analysis:]
We now focus more on the overall view comparing REPS with all the other state of the art algorithms in a series of benchmarks. 

\begin{figure*}[t!]
    \centering
    \includegraphics[width=\textwidth]{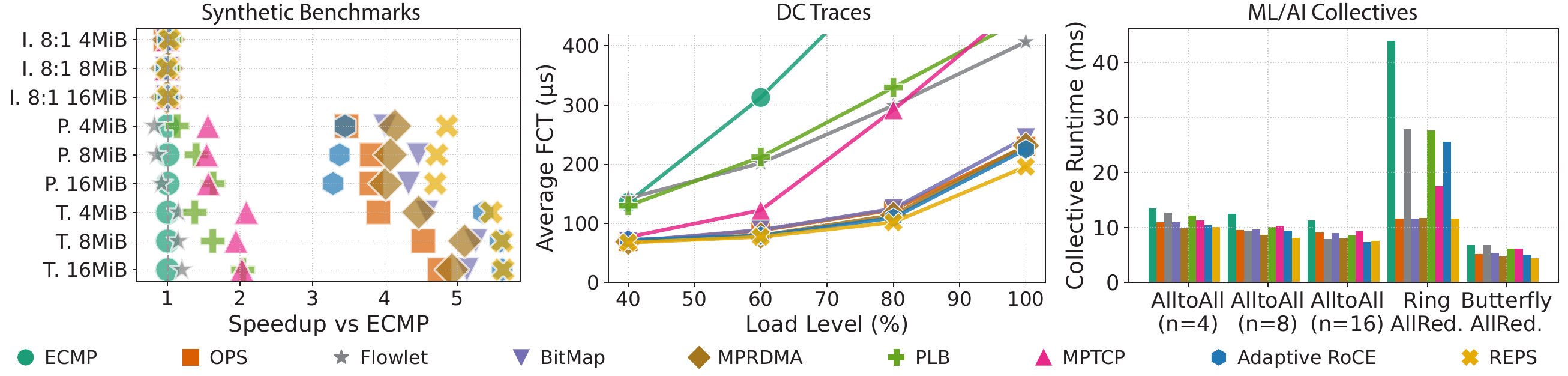}
    \caption{REPS performance in synthetic benchmarks (I.=Incast, P.=Permutation, T.=Tornado), DC traces and AI collectives and an asymmetric network due to 3\% of the TOR uplinks having downgraded bandwidth.}
    \label{fig:asymm_overall}
    \vspace{-1em}
\end{figure*}

In Figure~\ref{fig:macro_symm}, we visualize a summary of the performance of the various algorithms by looking at the runtime of the workloads (max FCT). As expected, in the case of incast, the performance is driven almost exclusively by the CC and, hence, we do not see any major difference between all the load balancers and even ECMP performs well. However, once we move to permutation and tornado workloads, ECMP collisions start to drastically reduce the performance of ECMP. In most cases REPS outperforms all the other algorithms. 

In the tornado case, Adaptive RoCE is able to match REPS since this is the ideal scenario for it: REPS, unlike Adaptive RoCE, still needs to guess during its initial BDP worth of packets. On the other hand, REPS outperforms Adaptive RoCE in the permutation pattern where taking a local optimal decision might not always lead to the best global outcome. We also see the difference between algorithms that were designed to reduce the number of out-of-order packets versus algorithms that do not have such hard constraints. Additionally, there is a distinction between algorithms that operate at packet-level granularity, such as REPS, OPS, BitMap, and MPRDMA, and those that operate at a coarser granularity, such as Flowlet and PLB.

For datacenter traces, we analyze the results for different load levels, ranging from 40\% to 100\%. Here we observe again, the clear difference between per-packet algorithms and less granular options. Even at higher load REPS is able to work well with a 5\% advantage over OPS.

We also run distributed collectives and report their completion times. We observe how, by design, the ring AllReduce has the same performance for most of the load balancing algorithms utilized. This is because, due to its ring design, there is no opportunity for congestion to accumulate. In AllToAll, REPS gets an up to 20\% advantage over the alternatives.

\end{description}

\begin{figure}[htbp]
    \centering
    \includegraphics[width=\linewidth]{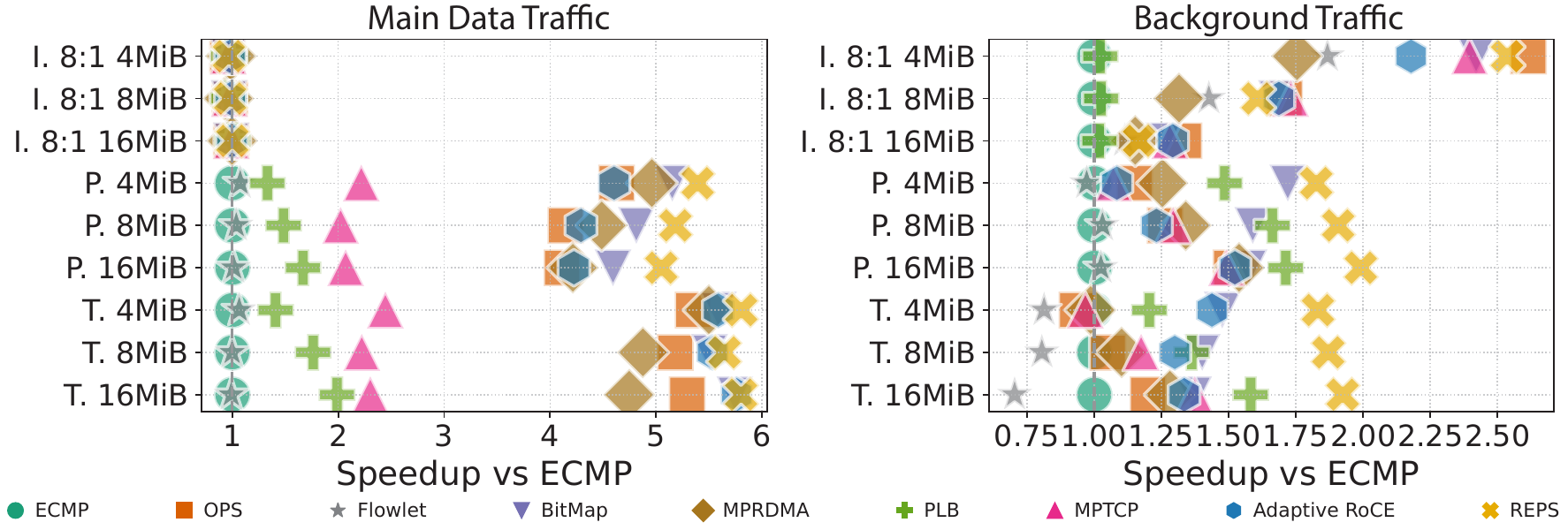}
    \caption{\tom{Synthetic traffic with REPS and non-REPS (ECMP) traffic, emulating a mixed traffic scenario.}}
    \label{fig:background_traffic}
    \vspace{-1em}
\end{figure}

\subsubsection{Asymmetric Network 
Conditions}\label{sec:eval:reps}

We evaluate REPS under different scenarios where some degree of asymmetry is created. We focus on two scenarios: 1) the network is not perfectly symmetrical because of some missing (or degraded) cables, \tom{2) there is some background traffic in the network that is using ECMP routing, analyzing the coexistence between REPS and non-REPS traffic.}

\begin{description}[leftmargin=0pt]
\setlength\itemsep{0mm}

\item[Microscopic analysis:]
We visualize this problem with a simple scenario where we have a switch with $n$ input and output ports and $n$ flows active, each from a different source sending a 32 MiB message. To create an asymmetry, we reduce one of the uplinks speed to 200 Gbps while all the other links remain at 400 Gbps. In Figure~\ref{fig:micro_symm} we visualize the output port utilization rate for OPS and REPS. 

We observe that while OPS chooses each port equally, irrespective of its actual bandwidth, REPS eventually converges to a stable configuration where the slower uplink is used less frequently. This results in both more stable queues but, more importantly, a much faster completion time ($1400\;\mu\text{s}$ for OPS and $799\;\mu\text{s}$ for REPS). 

\item[Macroscopic analysis:]
We now shift our focus to more general results when encountering asymmetries in a network. For space constraints, we focus mostly on the case where some of the links have a lower sending rate. In our first experiment we run synthetic benchmarks where $3\%$ of the TOR uplinks, chosen randomly, have been downgraded to 200 Gbps. In Figure~\ref{fig:asymm_overall}, we can see results similar to before where REPS gets an up to 500\% advantage over ECMP and 10\% advantage over the second best algorithm (usually BitMap). In the DC traces we can see a higher difference due to the asymmetry in the network. At 100\% load REPS gets a 25\% advantage over the second best algorithm and a 1000\% advantage over ECMP. We show the results for several AI collectives. We note how for AllToAll REPS keeps a small but significant advantage and in the AllReduce a sizable 30\% advantage over the second best performing algorithm.

In Figure~\ref{fig:background_traffic}, we showcase one example of REPS sharing traffic together with background ECMP traffic (we assume 10\% of the traffic is ECMP). In this case, REPS: 1) shifts REPS traffic away from ECMP traffic in order to not slow down REPS traffic, 2) helps background traffic by ensuring that it will not be slowed down by REPS traffic. This also highlights the possibility of incrementally deploying REPS on ECMP-based systems. Finally, we note that WCMP \cite{wcmp} could be used to enhance ECMP performance in the case of a topology with known asymmetries, but would not help as much in the case of unpredictable mixed traffic or sudden temporary asymmetries.

\end{description}
\subsubsection{Network Failures}\label{sec:eval:fail}
We focus our attention to cases where the network encounters a failure during operation. We collect data from several previous works on networking failures and also study internal logs to simulate the most commonly reported cases \cite{fail_source1, fail_source2, fail_source3}. Since the probability of a failure happening during a short simulation is low, we simulate worst case scenarios where we force individual failures to happen during the runtime of our simulations. In particular, we focus on total or partial failures of cables and switches. We note that we limit our failures to components that would not prevent the workload from completing (single point of failure).
\begin{figure}[ht]
    \centering
    \includegraphics[width=\linewidth]{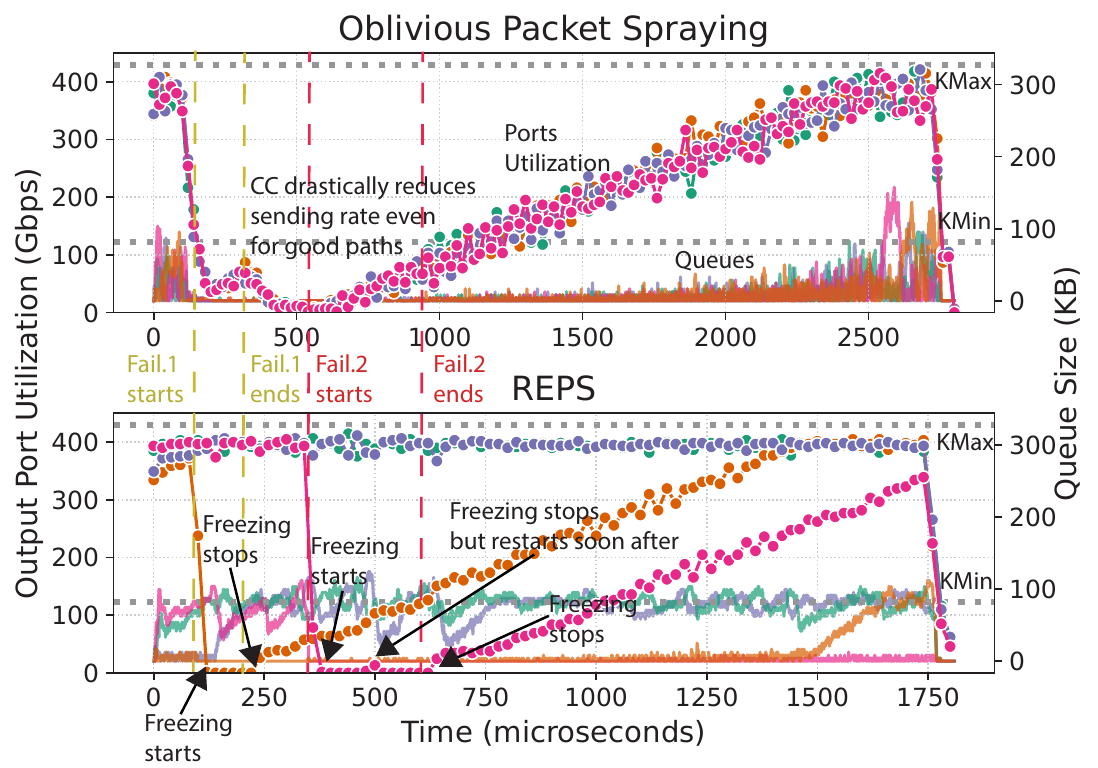}
    \caption{\changed{REPS vs. OPS in a 64 MiB permutation with two cable failures (a shorter one and a longer one).}}
    \label{fig:micro_fail}
    \vspace{-1em}
\end{figure}
\begin{description}[leftmargin=0pt]
\setlength\itemsep{0mm}
\item[Microscopic analysis:]
\tom{We begin with a synthetic scenario in which two TOR-switch uplink failures occur at different times: one lasting $100\,\mu\mathrm{s}$ starting at $t = 100\,\mu\mathrm{s}$, and another lasting $200\,\mu\mathrm{s}$ starting at $t = 350\,\mu\mathrm{s}$.}

In Figure~\ref{fig:micro_fail}, OPS keeps choosing all paths equally (although at lower rate due to CC activation), while REPS, once it enters freezing mode, stops selecting the failing paths all together after only one timeout period (order of tens of microseconds). Afterwards, once the failure stops, REPS also exits freezing mode and converges once again to use all paths. The overall result is that, compared to OPS, REPS completes the workload more than 35\% faster even with such a short failure and, more importantly, reduces the number of dropped packets by $2.5\times$. We showcase a similar analysis for incremental and concurrent failures in Appendix~\ref{appendix:additional_results:incr_fail}. 

\begin{figure*}[t!]
    \centering
    \includegraphics[width=\textwidth,height=0.7\textheight,keepaspectratio]{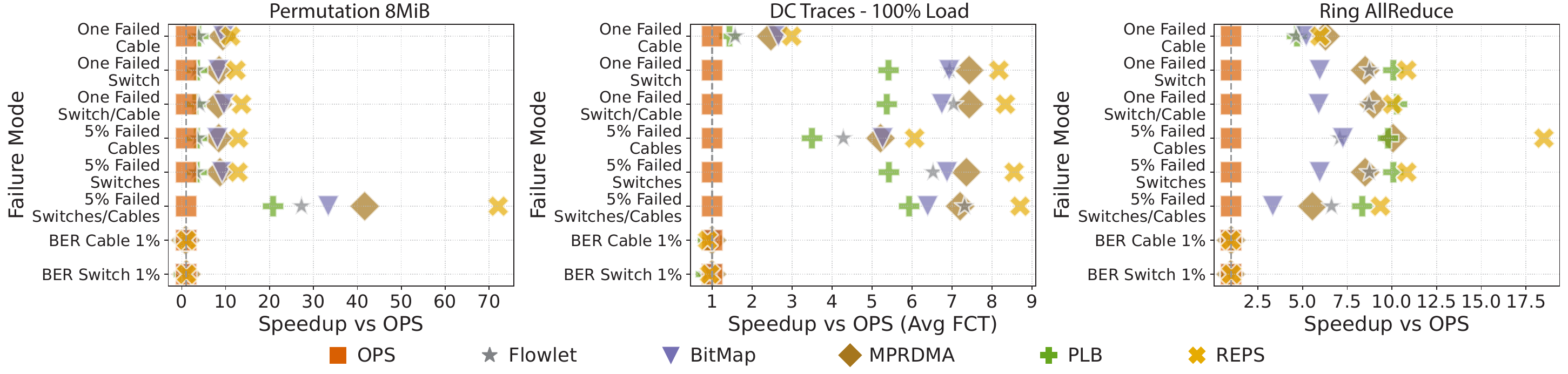}
    \caption{REPS performance under different failure modes in a 8 MiB permutation, DC traces at 100\% load and a ring AllReduce.}
    \label{fig:macro_fail}
    \vspace{-1em}
\end{figure*}

\item[Macroscopic analysis:]We showcase three cases with a series of failure modes in Figure~\ref{fig:macro_fail}. \tom{We introduce each failure mode after a fixed interval to highlight how REPS dynamically reacts to changing network conditions, and then let the failure persist for the remainder of the experiment.} We can see when dealing with total failures that REPS provides a dramatic speedup over OPS but also other load balancer algorithms. \tom{This is because of \emph{freezing mode} which helps REPS stick to the safe path configuration in its cache, allowing it to react immediately upon failure, much faster than the time ECMP routing takes to converge and exclude the failing path.} Positively, we note that the gains with REPS increase with the amount of failures. Furthermore, random drops do not affect negatively REPS performance. MPRDMA also performs well on average thanks to its self-clocking mechanism.

\begin{figure}[htbp]
    \centering
    \includegraphics[width=0.865\linewidth,height=0.22\textheight,keepaspectratio]{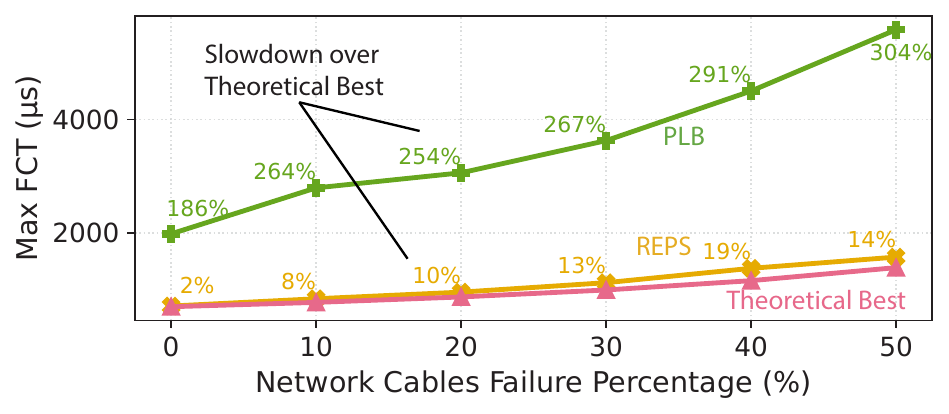}
    \caption{\changed{Extreme failures scenario.}}
    \label{fig:big_fail_case}
    \vspace{-1em}
\end{figure}

To further demonstrate the resilience of REPS, we evaluate its performance under an extreme and unlikely scenario characterized by increasingly large and long-lasting network failures during a permutation. As shown in Figure~\ref{fig:big_fail_case}, REPS performs close to an ideal load balancer, even with 50\% of network cables failing, while PLB, the second-best alternative, lags significantly behind.

\tom{This analysis proves that REPS is able to guarantee good load balancing performance under short-lived failures, long-lived failures and a combination of the two even with catastrophically high failure rates.}
\end{description}

\subsection{REPS-FPGA Evaluation} \label{sec:testbed}





\subsubsection{Healthy Symmetric Network Conditions}\label{sec:hw_eval:symm}

The first set of results focuses on a baseline healthy symmetric network configuration. Figure~\ref{fig:reps_symmetric_real:a} shows per-flow goodput defined as end-to-end useful bit rate observed by application, after header, overheads, retransmissions, etc. We use OPS and REPS across two experimental configurations: denoted as \textit{setup-1} and \textit{setup-2}. In \textit{setup-1}, all FPGA endpoints under the two T0s are active while in \textit{setup-2}, 40 out of 64 FPGA endpoints are active.

We present results from both of these configurations as we observe small unexpected performance variations depending on which and how many switch ports are active in experiments near the peak network performance levels using all switch ports (\textit{setup-1}). These variations appear to be related to internal switch microarchitectural details such as port-buffer affinity and vendor-specific scheduling policies. The \textit{setup-2} uses a subset of the switch ports and eliminates most of these vendor-specific and implementation-related behaviors. To get a better understanding of this behavior we performed a sweep where we capped the TX rate of our FPGA NICs and discovered that the slight degradation for \textit{setup-1} when using REPS appears to subside if the TX rate is capped at 95 Gbps.


\subsubsection{Asymmetric Network Conditions}\label{sec:hw_eval:asymm}
We evaluate the performance of REPS under asymmetric network conditions. We connect 16 endpoints through two T0 switches (8 endpoints each) with a total of 4 links to a pair of T1 switches. To demonstrate the adaptive load balancing capabilities of REPS, we change the link speed of one T0-T1 link from 400 Gbps to 200 Gbps, creating asymmetry in the network. Figure~\ref{fig:reps_symmetric_real:b} shows the per-flow goodput as observed by the application while Figure~\ref{fig:reps_asymmetric_real:a} the FCT distribution. OPS sends packets across all paths (including those crossing the 200 Gbps link) with equal probability and is ultimately capped by the slower 200 Gbps path. The ECN marking on the 200 Gbps path causes the CC algorithm to throttle all flows and eventually match the capacity of that single slower link, thus leading to underutilization of the remaining 400 Gbps links (that are running at 50\% utilization).

REPS can gracefully adapt in such a scenario as the cached entropies will reflect the network asymmetry and result in a path distribution that is skewed to tailor to the relative capacity of the available paths. In this example, REPS can reach high utilization with average per-flow goodput within 5\% of the ideal fair-share target. 


\begin{figure}[t]
    \begin{subfigure}[b]{0.48\linewidth}
        \includegraphics[width=\linewidth]{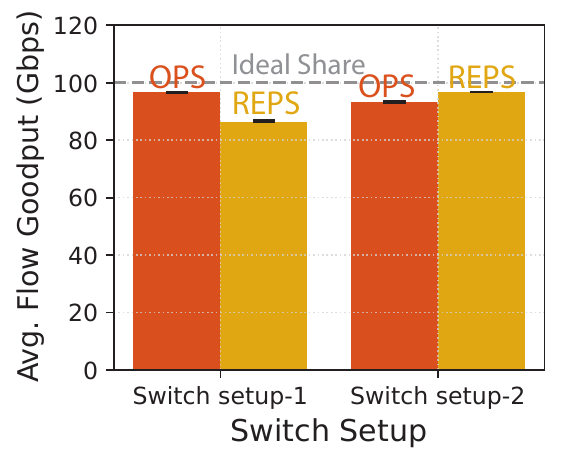}
        \caption{Symmetric Network}
        \label{fig:reps_symmetric_real:a}
    \end{subfigure}
    \hfill
    \begin{subfigure}[b]{0.48\linewidth}
        \includegraphics[width=\columnwidth]{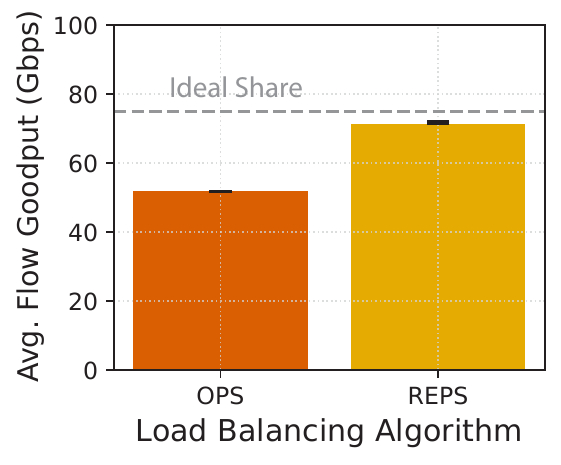}
        \caption{Asymmetric Network}
        \label{fig:reps_symmetric_real:b}
    \end{subfigure}
    \caption{REPS-FPGA impact on goodput.}
    \label{fig:reps_asymmetric_real}
    \vspace{-1em}
\end{figure}

\begin{figure}[t]
    \begin{subfigure}[b]{0.48\linewidth}
        \includegraphics[width=\linewidth]{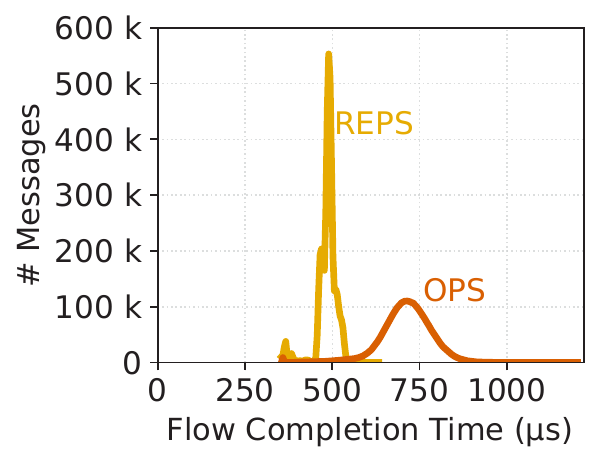}
        \caption{Asymmetric Network}
        \label{fig:reps_asymmetric_real:a}
    \end{subfigure}
    \hfill
    \begin{subfigure}[b]{0.48\linewidth}
        \includegraphics[width=\columnwidth]{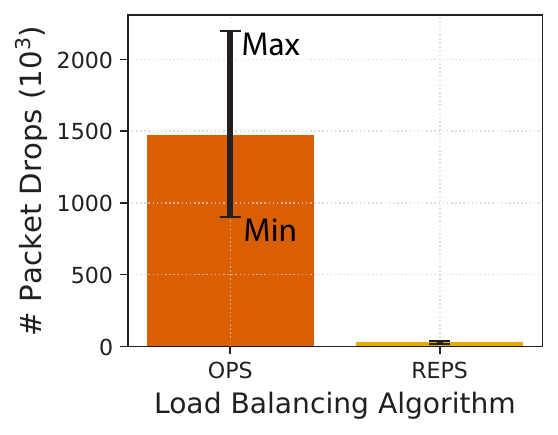}
        \caption{Link Failure}
       \label{fig:reps_link_failure}
    \end{subfigure}
    \caption{REPS-FPGA impact on FCT and packet drops}
    \label{fig:reps_asymmetric_real}
    \vspace{-1em}
\end{figure}

\subsubsection{Network Failures}\label{sec:hw_eval:fail}
We also evaluate the performance of REPS in the presence of network failures. To demonstrate the robustness and resilience of REPS in the context of link failures, Figure~\ref{fig:reps_link_failure} shows total packet drops (average across five runs each) observed in a large-scale 128 endpoint run (endpoints split across 2 T0s connected through 8 T1s) where we abruptly bring down a T0-T1 link during the experiment. While the network is trying to recover from the impact of this event (which in our environment can take in the order of 100s of milliseconds), OPS continues sending packets across all paths (including those affected by the link that went down). The freezing capability of REPS can quickly adapt to such events (within the order of an RTO) and avoid sending packets down the affected paths as the entropy cache is replenished from packets traversing healthy paths. 

\subsection{REPS Applicability}\label{sec:reps_flex}
\changed{In this section, we briefly evaluate REPS, in simulations, under different scenarios by changing the ACK coalescing ratio, the EVS size and, the underlying CC algorithm. Generally, we believe that REPS can work well under many different circumstances, topologies and workloads.}

\subsubsection{ACK Coalescing}\label{sec:ack_ratio}
\changed{We have primarily evaluated REPS without ACK coalescing, as this configuration allows REPS to operate with the most up to date data. However, some transport protocols permit ACK coalescing, where the receiver sends an ACK packet only after receiving $n$ data packets from the sender. In theory, the coalesced ACK packet could return all previous non-ECN marked entropies in its header (we call this configuration \textit{ACK+Carry EVs}). As an alternative approach, each entropy could be assigned a lifespan in the REPS buffer and reused $n$ times (\textit{ACK+Reuse EVs}).}

\begin{figure}[htbp]
    \centering
\includegraphics[width=\linewidth]{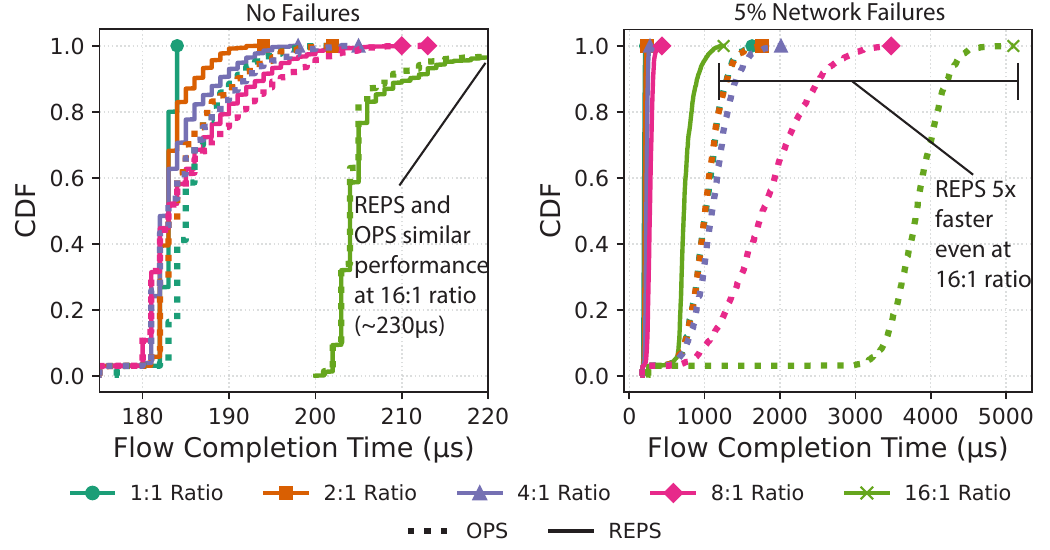}
    \caption{\changed{Performance with different ACK coalescing ratios during an 8MiB permutation.}}
    \label{fig:reps_vs_ops_diff_ack_coalescing}
    \vspace{-0.5em}
\end{figure}

\changed{We run an 8 MiB permutation simulating different ACK coalescing ratios and only the standard REPS configuration. As we can see in the left part of Figure~\ref{fig:reps_vs_ops_diff_ack_coalescing}, REPS with 2:1, 4:1, and 8:1 coalescing ratios does significantly better than OPS, while with 16:1 it starts losing its advantage. However, it should be noted that in case of asymmetries or failures (right Figure~\ref{fig:reps_vs_ops_diff_ack_coalescing}), REPS does much better even at 16:1 ratio. We confirm these results theoretically in Appendix~\ref{appendix:additional_results:ack_theoretical}.}

Figure~\ref{fig:reuse_reps} demonstrates that \textit{Carry EVs} and \textit{Reuse EVs} are the preferred REPS variants for ACK coalescing, particularly under high coalescing ratios. We leave extensions of this approach to future work.

\begin{figure}[htbp]
    \centering
    \includegraphics[width=\linewidth]{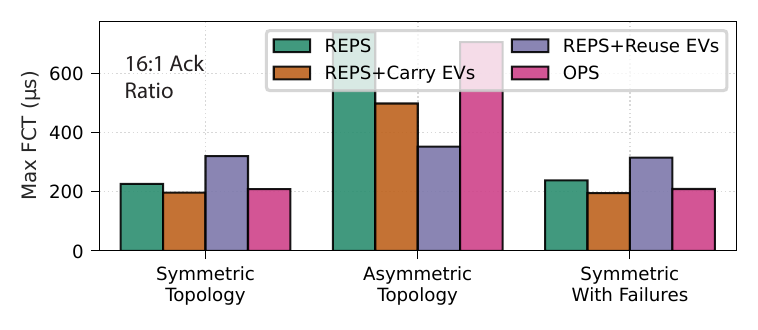}
    \caption{Different REPS variants for ACK coalescing.}
    \label{fig:reuse_reps}
    \vspace{-1em}
\end{figure}

Finally, we note that the trade-off of sending more ACKs is worth the effort if the underlying hardware supports such rate and the impact on the network traffic is minimal (\textasciitilde
1\%).

\subsubsection{EVS Size} \label{sec:evs_size}


Practical implementations make use of hashing functions to map EVs onto output ports. We show, first theoretically and then with simulations, that while using a small EVS can lead to high inherent load balancing issues, using $2^{16}$ EVs is very close to uniformly random.


We investigate the phenomenon that a small EVS leads to poor load balancing of the output ports of a switch in the case of a fat tree topology.

We use a balls-into-bins model~\cite{DBLP:conf/random/RaabS98}, where $m$ balls are thrown, uniformly and independently, at random into $n$ \emph{bins}. The goal is to determine the largest number of balls in any bin, referred to as the \emph{maximum load} $l(m, n)$.  We define the \emph{load imbalance} $\lambda_{m, n}$ as $\frac{l(m, n)}{m/n}-1$, representing the extent to which the most heavily loaded bin exceeds the average. 

In our setting, the output ports correspond to bins and the EVs correspond to balls. 
In our model, $\lambda_{m, n}$ represents the load imbalance of the EVs onto uplinks. Since EVs are chosen for packets uniformly at random, the load balancing of EVs directly affects the load balancing of packets onto output ports.

The load imbalance depends on the average number of balls per bin (i.e., $\frac{m}{n}$): if $m = n$, then the load imbalance is $\Theta\big(\frac{\log n}{\log \log n}\big)$ with high probability. 
However, when the ratio of balls to bins $\frac{m}{n} \gg \log n$, the load imbalance tends to zero with high probability~\cite{DBLP:conf/random/RaabS98}.
In conclusion, for OPS and a fixed number of flows, we expect high load imbalance with a small EVS and near-zero load imbalance as the EVS increases.

We confirm this theoretical analysis with simulations of the load imbalance. Figures \ref{img:one_flow_entropies} and \ref{img:32_flow_entropies} show the distribution of the load imbalance for $1$ and $32$ unique flows, respectively. We note that each flow is from a different sender and will hence have different header fields that will be used in the hashing function regardless of the EV value.
We note that for each case, we throw for each active flow a number of balls equal to the EVS size, with each ball being a unique EV. We can see that for 32 flows, choosing less than $2^{8}$ EVs can lead to more than $10 \%$ load-imbalance, whereas $2^{16}$ EVs guarantee less than a $1 \%$ load imbalance.

\begin{figure}[htpb]
    \begin{subfigure}[b]{0.48\linewidth}
        \includegraphics[width=\linewidth]{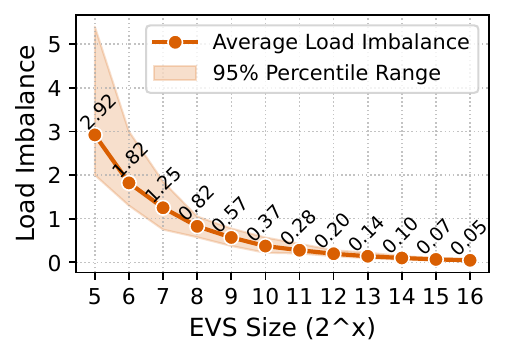}
        \caption{1 flow active}
        \label{img:one_flow_entropies}
    \end{subfigure}
    \hfill
    \begin{subfigure}[b]{0.48\linewidth}
        \includegraphics[width=\columnwidth]{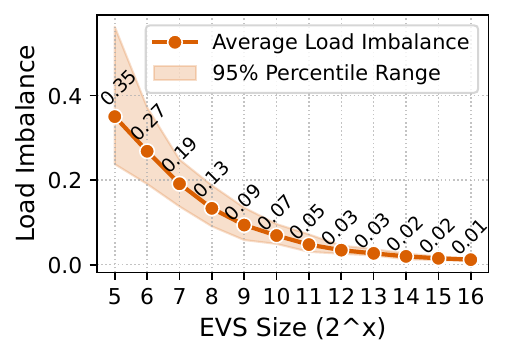}
        \caption{32 flows active}
        \label{img:32_flow_entropies}
    \end{subfigure}
    \caption{Expected load imbalance at a switch (32 uplinks).}
    \vspace{-0.5em}
\end{figure}
\changed{Moreover, this requirement also grows with the number of output ports in a switch as demonstrated in Section~\ref{sec:rec_theory}. On the other hand, REPS, due to its adaptive nature, can drastically reduce the EVS size and still work well. We show this in the left plot of Figure~\ref{fig:reps_vs_ops_features} where we compare, in a real scenario, OPS and REPS when using 32, 256, and 64K EVs. REPS works equally well with 256 and 64K EVs and is only 8\% slower with 32 EVs. On the other hand, OPS is 21\% and 64\% slower with 256 and 32 EVs when compared to 64K. This confirms that REPS could potentially work well even just with 1 byte for the EVS. We note that while OPS could be implemented without using EVS, such as through round-robin selection or by making random choices directly at the switch, these approaches introduce additional challenges  (Section~\ref{sec:rec_theory_lim}).}

\subsubsection{Different CC Algorithms}
REPS has been designed to work with any CC algorithm as long as there is no over-reaction to out-of-order packets and ECN support. In this paper, we have, so far, evaluated REPS working alongside a tuned version of DCTCP (similar to what is used in MPRDMA \cite{10.5555/3307441.3307472}). However, as we will see, REPS can work well even with other algorithms. Moreover, we envision also a version of REPS that could work just with delay if ECN is not supported but we do not go into details here. For example, in the right plot of Figure~\ref{fig:reps_vs_ops_features}, we run an 8 MiB permutation workload without failure for DCTCP, EQDS and a proprietary CC algorithm. REPS can help all of these CC algorithms when compared to OPS.

Intuitively, REPS works on a per-packet basis, meaning each flow simultaneously uses many paths. If one path starts encountering congestion, REPS will quickly route away from it. Temporarily, the CC algorithm might slightly lower the congestion window or sending rate. For instance, if it uses only ECN as its signal and only one path is congested, the decrease would be minimal (assuming a CC that does not overreact to each ECN-mark). Moreover, non-ECN packets from uncongested paths would quickly restore the window. If all paths are congested, REPS cannot add capacity, and the CC rightly lowers the congestion window; this is precisely what we observe in the incast experiments where every balancer behaves the same (Figure~\ref{fig:macro_symm}). In these cases REPS (like all routing algorithms) cannot fix the issue and the CC must kick-in to correctly adjust the sending rate. Finally, we note that one could potentially optimize the CC further for a smoother REPS integration. For example, the CC could activate only after a certain degree of congestion is detected, while leaving the initial work to the load balancer. Having said that, this is mostly orthogonal to REPS and also affects other load balancers.

\begin{figure}[htbp]
    \centering
    \includegraphics[width=\linewidth]{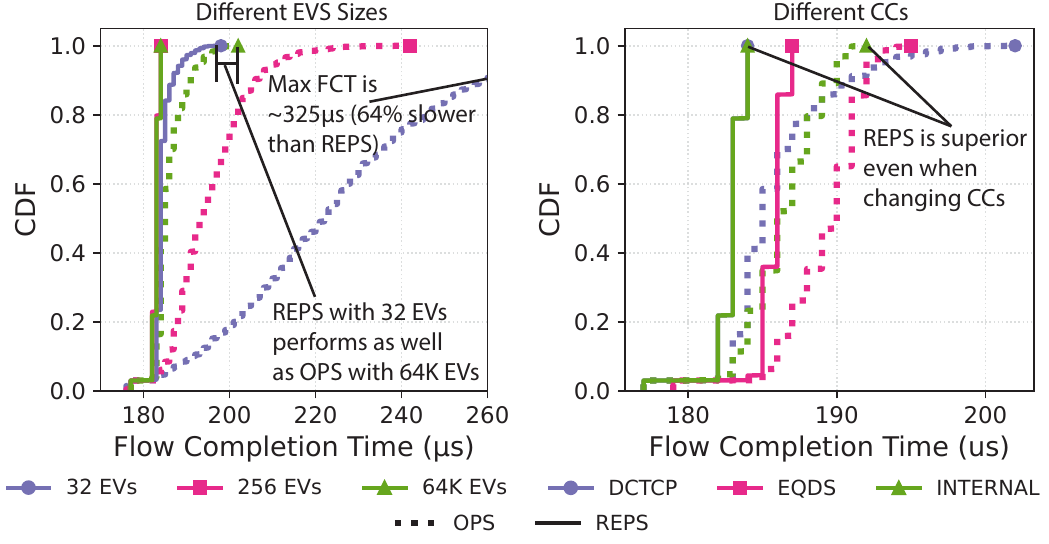}
    \caption{\changed{Performance with different EVs sizes and CCs during an 8MiB permutation.}}
    \label{fig:reps_vs_ops_features}
    \vspace{-1em}
\end{figure}

\subsubsection{Topology Scaling}
We then shift to verifying REPS performance when changing the scale of the topology. To do this, we run a 8MiB tornado workload and vary the number of nodes in the topology. We range from 128 nodes (switches with 16 ports) to 8192 nodes (switches with 128 ports). Moreover, we also vary the EVS size and compare once again REPS to OPS. Figure~\ref{fig:scaling} shows the maximum completion time for each configuration and we run each experiment multiple times to account for randomness in the initial EVs. As we can see, REPS performs well for all topology sizes and almost all EVS sizes with only a small regression when using 16 EVs. This, potentially, means that REPS could work with far less than 16 bits needed for the EV, even at large scale. On the other hand, OPS suffers much more due to its non-adaptivity when having a limited number of EVs with 16 values running more than twice slower than when using the full 16 bits (65536 possible values). We also observe a slight upward trend in the maximum FCT with OPS as topology size increases. The key point is that, with REPS, even if a limited number of EVs prevents each flow from using all available paths, the increased adaptivity compensates for this. In addition, other packet-header fields contribute additional randomness because switch hashing functions combine them with the EV. This observation is consistent with the analysis in Section~\ref{sec:evs_size}.

\begin{figure}[htbp]
    \centering
    \includegraphics[width=\linewidth]{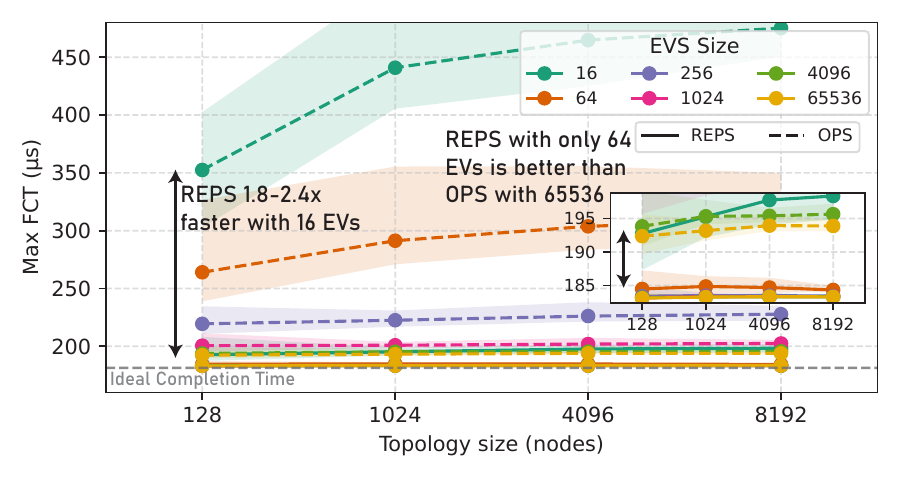}
    \caption{\changed{Performance with different EVS sizes and topology sizes during an 8MiB permutation.}}
    \label{fig:scaling}
    \vspace{-1.5em}
\end{figure}

\section{Theoretical Verification} \label{sec:theo_mot}

\changed{To support our experimental findings, we present a theoretical first principles analysis of OPS and demonstrate how it can lead to arbitrarily long queues. 
To address this issue, we develop a new \textit{recycled balls-into-bins} model and prove its local convergence. This model serves as the theoretical intuition behind the REPS protocol (Section~\ref{sec:reps}).}

\subsection{Recycled Balls-into-Bins Model} \label{sec:rec_theory}
\changed{At maximum injection rate, OPS suffers from severe load-imbalance that eventually leads to queues of arbitrary length building up. We explain this behavior with an infinite batched balls-into-bins model~\cite{DBLP:journals/algorithmica/BerenbrinkFKMNW18, DBLP:journals/dc/BecchettiCNPP19, DBLP:conf/stacs/LosS23}. In contrast, we show that recycling good paths such as in REPS leads to a convergent behavior and logarithmically-bounded queues.} 

\changed{In our switch model, each output port corresponds to a bin. At each time step, every non-empty bin removes one element. Afterward, a new set of balls (packets) arrives and is distributed among the bins. In our setting, we focus on the case where $n$ balls arrive in each time step, representing full throughput.  The maximum queue length at any time step corresponds to the maximum load of any bin at that time step.  Balls are removed in FIFO order from the bins.} 

\changed{In OPS, balls are allocated to bins uniformly at random. In what follows, we assume the EVS is sufficiently large (i.e., 16 bits), allowing us to model the assignments as uniformly random.}
\changed{If balls arrive at a rate of $\lambda n$ for $\lambda < 1$, the process remains stable. The maximum load at any time step is $O\left(\frac{1}{1-\lambda} \log \frac{n}{1-\lambda}\right)$ with high probability~\cite{DBLP:journals/algorithmica/BerenbrinkFKMNW18} and with probability approaching $1$, there is always a bin containing $\Omega\left(\frac{1}{1-\lambda} \log n\right)$ balls~\cite{DBLP:journals/algorithmica/BerenbrinkFKMNW18}. In the limit as $\lambda \rightarrow 1$, this implies that some bin will eventually become arbitrarily overloaded. In the context of load balancing, this means that at the maximum injection rate, \emph{oblivious random spraying leads to unbounded queue lengths}. Intuitively, this occurs because $n$ balls are introduced at each step, but fewer than $n$ may be removed, as some output ports may remain unselected.}

\changed{As $n$ increases, the maximum load grows, exacerbating congestion in OPS. Figure~\ref{fig:ports_study} illustrates this effect for $\lambda = 0.99$, where larger numbers of bins (output ports) result in faster-growing maximum queues.}

\begin{figure}[htbp]
    \centering
    \includegraphics[width=0.88\linewidth]{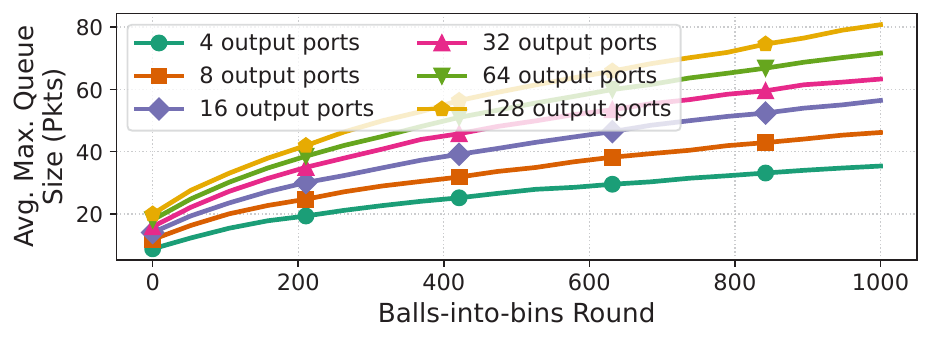}
    \caption{Simulating 1000 rounds of balls-into-bins.} 
    \label{fig:ports_study}
    \vspace{-1em}
\end{figure}


\changed{We propose a new model, called \emph{recycled balls-into-bins} and prove it converges locally to a load-balanced state with bounded queue lengths, even at maximum injection rate.}

\changed{In the recycled balls-into-bins model, we keep a set of $b \cdot n$ colors (for some constant $b$) and a threshold $\tau$. We cycle through the colors in round-robin fashion in batches of $n$ colors. In each time step, we remove a ball from each nonempty bin. If a bin has at most $\tau$ balls, the color of the removed ball remembers the bin, unless it already remembers another bin. If the bin has more than $\tau$ balls, the color forgets its bin.
Then, for each color in the batch that remembers a bin, throw a ball into the remembered bin. For all colors in the batch without a remembered bin, throw a ball uniformly at random.} 
\changed{We show that, for a single switch, \emph{recycled balls-into-bins converges}, meaning all colors remember a bin and keep the same bin remembered.} 
\begin{theorem}\label{thm:convergence}
For $n\geq 16$, $\tau \geq 4 \ln n $, $b \geq 2.4 \ln n$, recycled balls-into-bins converges in $O(n \log n)$ steps. Every bin has $O(\log n)$ elements throughout, with probability $1-o(1)$.
\end{theorem}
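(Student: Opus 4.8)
\noindent The plan is to (i) pin down the absorbing configuration, (ii) bound the maximum load uniformly over all steps, and (iii) give a drift/progress argument for the $O(n\log n)$ convergence time. First I would characterize the converged state by a counting argument. If every color is settled and never forgets again, then each bin removes exactly one ball per step and must balance its inflow over a cycle of $b$ steps; since there are $bn$ colors and $n$ bins, every bin must be remembered by exactly $b$ colors. I would then verify this balanced assignment is absorbing: because $b\le\tau$ (ensured by $b\ge 2.4\ln n$ and $\tau\ge 4\ln n$), even the worst clumping of a bin's $b$ remembering colors into a single batch produces a transient peak of at most $\approx b<\tau$, so no ball is ever removed while the load exceeds $\tau$ and all memories persist forever.

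Second, the $O(\log n)$ load bound. The key structural fact is that a bin's load can climb above $\tau$ only through balls deposited before the threshold was crossed, and the moment a ball is removed from an over-threshold bin its color forgets and stops feeding the bin. Hence the overshoot accrued in any single step is at most the number of current-batch colors pointing to the bin plus the number of random throws landing in it, and both are $O(\log n)$ with high probability by a Chernoff bound. I would promote this into a step-by-step invariant---``no bin exceeds $c\log n$''---and close it with a union bound over the at most $O(n\log n)$ steps and $n$ bins, the constants $4\ln n$ and $2.4\ln n$ being tuned to dominate the union-bound factor and deliver the $1-o(1)$ probability. The argument must be arranged to avoid circularity: the load invariant is maintained inductively in time and does not presuppose convergence.

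Third---and this is the crux---the convergence time. I would use a potential equal to the number of unfilled slots, i.e.\ $\sum_i \max(0,\,b-(\text{colors permanently settled on bin }i))$, which equals the number of still-unsettled colors. The essential coupling is that the threshold keeps loads balanced, which keeps the free slots \emph{spread} across many bins rather than concentrated; a random throw then lands in an under-quota bin with probability proportional to the current potential, yielding a coupon-collector-type recurrence of the form $\Delta k \approx -k^2/n$ per cycle, whose solution gives $O(n)$ cycles and hence $O(n\log n)$ steps (the $O(\log n)$ FIFO delay between a throw and the removal that settles its color is absorbed into a single cycle and is negligible). The main obstacle is that the batches are coupled through shared bin loads: a transient collision can momentarily push a bin past $\tau$ and force an already-settled color to forget, undoing progress. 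I would control this by showing that, once the load invariant holds, the rate of spurious forgetting is strictly dominated by the rate of genuine settlement, so the potential is a supermartingale with drift bounded away from zero until it hits zero; ruling out persistent oscillation between two near-balanced assignments, and quantifying this negative drift, is where the gap $b<\tau$ and the bulk of the technical effort will be spent.
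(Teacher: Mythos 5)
Your overall architecture (a warm-up phase, a uniform load invariant, then a drift argument for the hitting time) matches the paper's, but the step that actually carries the theorem---the negative-drift bound for your potential---is left as a plan rather than proved, and it is precisely where your choice of potential creates the difficulty. You take the potential to be the number of unsettled colors and want a coupon-collector recurrence $\Delta k \approx -k^2/n$ per cycle. For that you need (a) that an unsettled color settles essentially only by finding an under-quota bin, and (b) that settled colors essentially never forget. Neither is immediate: a random throw can land in a bin that already holds its quota of $b$ settled colors but currently has at most $\tau$ balls; the color then settles there, the bin becomes over-subscribed, its load later crosses $\tau$, and previously settled colors forget. You name this oscillation yourself and defer ``quantifying this negative drift'' to future technical effort---but that quantification \emph{is} the theorem. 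Two smaller soft spots: the hypotheses $b\ge 2.4\ln n$ and $\tau\ge 4\ln n$ do not by themselves give $b\le\tau$, which your absorption argument invokes; and your per-step $O(\log n)$ overshoot bound does not obviously prevent accumulation across steps, since colors that committed to a bin before it crossed $\tau$ keep feeding it afterwards until their balls are individually removed.

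The paper sidesteps the settled/unsettled bookkeeping entirely. Its potential is the total overflow $Y_t=\sum_i\max(0,X^i_t-\tau)$; it shows $E[Y_{t+1}-Y_t]\le -1/32$ by splitting bins into three regimes (far below $\tau$, within $\ln n$ of $\tau$, and above $\tau$), controlling the middle regime with a conditional-expectation tail bound for the binomial, and then applies an additive drift theorem. Since $Y_0=O(n\log n)$ after the coupon-collector warm-up, and since with high probability every ball has a distinct color throughout (this is where $b\ge 2.4\ln n$ actually enters---to guarantee a color's ball is removed before the color is reused, not to compare $b$ with $\tau$), the first time $Y_t=0$ the process is absorbed; the $O(\log n)$ load bound then follows from the batched balls-into-bins maximum-load result at injection rate at most $1/2$, not from a step-by-step Chernoff invariant. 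If you want to salvage your route you would need a lemma showing the forgetting rate is strictly dominated by the settling rate; the overflow potential is the cleaner way to encode exactly that domination.
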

\changed{The proof in Appendix \ref{appendix:proof} shows that recycled balls-into-bins converges as bins fill and stabilize below the threshold.}

\begin{figure}[htbp]
    \centering
    \includegraphics[width=0.89\linewidth]{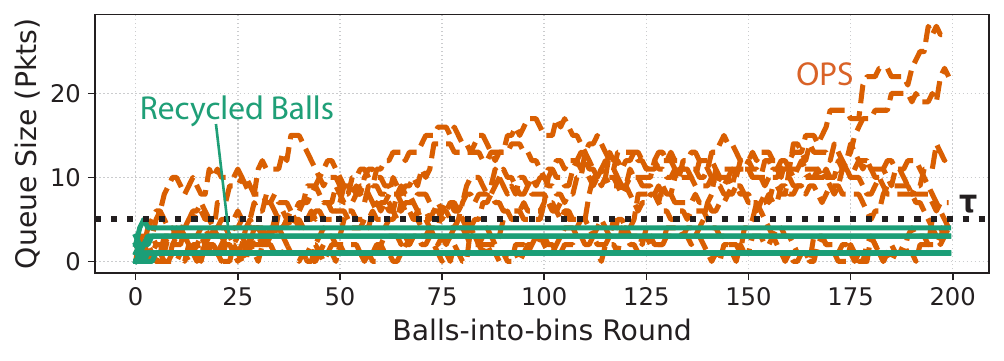}
    \caption{\changed{Simulating 200 rounds of balls-into-bins.}
    }
    \label{fig:ops_recycle}
    \vspace{-1em}
\end{figure}

\changed{To visualize these behaviors, we model the balls-into-bins problem with both OPS and with the recycled balls improvement. In Figure~\ref{fig:ops_recycle}, we set $n = 5$ (for visualization purposes, but this holds true for more realistic $n$ values) and showcase the queues' evolution for OPS and the recycled balls model. The results confirm that while OPS leads to unbounded queue growth, the recycled balls model converges, keeping all queues below the threshold ($\tau$) value, consistent with our previous network simulations (e.g., Figure~\ref{fig:nocc_symm} and Figure~\ref{fig:micro_symm}).}

\subsection{Limitations and Alternatives}  \label{sec:rec_theory_lim}
\changed{While the provided model performs well in an idealized setting with maximal injection rates, in a real network scenario, CC algorithms would reduce the sending rate if a large queue were to build up. However, such activations of the CC would result in a slowdown of the sending rate and, hence, an increase of the completion time of a workload as we can see in Section~\ref{sec:eval:nonos}. Moreover, we note that while real scenarios would have more subtle interactions not modelled here (like using distinct $KMin$ and $KMax$), the key results hold and are consistent with our evaluations in Section~\ref{sec:eval_overall}.
Finally, we note that here we provide a proof only about the local convergence of REPS and that other models such as static path assignments and round-robin across the ports would also work well and potentially provide no queueing. However, such approaches are not viable in reality for a series of reasons: they cannot react well to partial or total failures, they would struggle with multi-tier topologies and they require detailed knowledge of the underlying workloads \cite{facebook}.}

\section{Related Work} 
The literature for load balancing designs is vast, however most of it focuses on TCP-based deployments where out-of-order packets are not desirable~\cite{microTE, hermes, hedera, 201562, 6567015}. In contrast, REPS aligns with a new transport paradigm (UltraEthernet~\cite{ultra}, Adaptive RoCE by NVIDIA~\cite{nvidia}, Falcon~\cite{googleIntroducingFalcon}) where out-of-order packets help to leverage the capacity of multi-path networks.

Generally, most load balancers work at different granularity: per-flow, sub-flow or per-packet. ECMP \cite{ecmp} is a per-flow solution and susceptible to flow collisions and are oblivious to congested paths. Solutions like Hedera \cite{hedera} and MicroTE \cite{microTE} require a global controller which is not desirable in production datacenters \cite{facebook}. For these reasons, many sub-flow solutions like Flowlet Switching \cite{201562}, Flowcut Switching \cite{flowcut}, Presto \cite{flowcell}, CONGA \cite{conga}, PLB \cite{10.1145/3544216.3544226}, FlowBender \cite{flowbender} have been proposed. However, some of these solutions are still congestion oblivious (Flowlet, Presto), react too slowly for AI/ML bursty and intense traffic (PLB, FlowBender), require specialized switches (CONGA). Moreover, most of these solutions are unable to quickly deal with blackholes and failures.

\tom{Load balancers with per-packet granularity, such as OPS \cite{6567015} or DRILL \cite{drill}, help drastically reduce ECMP collisions, but are still oblivious to asymmetries (for OPS) or require switch support (for DRILL). As demonstrated in Section~\ref{sec:theo_mot} OPS can suffer even in symmetric cases. MPRDMA \cite{10.5555/3307441.3307472} also uses ECN for load balancing like REPS, however it requires probing and ACK clocking and does not offer caching of entropies. Hermes \cite{hermes} combines both ECN and delay but it works best with TCP-like protocols and has many parameters with complex tuning.}

\changed{ConWeave \cite{conweave} is designed specifically for RDMA networks and offers a solution by masking out-of-order packets
in commodity RNICs but it requires changes to TOR switches
and has limited scalability.}

Proteus \cite{Proteus} focuses on optimizing the load balancing of lossless PFC networks while REPS focuses on lossy networks.
\section{Conclusion}
\changed{We presented REPS, a simple, lightweight, yet highly effective load-balancing mechanism designed to meet the constraints of next-generation datacenter networks tailored for AI workloads. As demonstrated in our extensive evaluations, conducted through both simulations and FPGA hardware, REPS' adaptive entropy caching enhances end-to-end performance across multiple critical metrics, including average flow completion time, runtime, and packet loss. REPS outperforms ECMP and OPS by up to 6x and 1.25x in symmetric networks, and by up to 5x and 2x in asymmetric networks, outperforming OPS by as much as 100x during short-term transient link failures while reducing packet drops by over 70x. Moreover, REPS consistently beats or matches other state-of-the-art algorithms in all scenarios evaluated. We also showed how REPS can work well under various network configurations demonstrating its flexibility to adapt to different scenarios while remaining light-weight on NIC memory requiring only 25 bytes of state per-connection.}

\begin{acks}
We thank our shepherd, Daniel Amir, and the anonymous reviewers of EuroSys ’26 for their valuable feedback.
This work is supported by the European Union’s Horizon Europe under grant 101175702 (NET4EXA), the Sapienza University Grants ADAGIO and D2QNeT (Bando per la ricerca di Ateneo 2023 and 2024), the European Research Council (ERC) under the European Union’s Horizon 2020 research and innovation program (grant agreement PSAP, No. 101002047), and a CAF America grant. We thank the Swiss National Supercomputing Center (CSCS) for providing the computational resources used in this work. We also thank Marcin Copik, Ales Kubicek, and Nadeen Gebara for providing their valuable feedback.
We used ChatGPT solely for text editing and quality checks.
\end{acks} 

\bibliographystyle{ACM-Reference-Format}
\bibliography{bib}

\newpage
\appendix
\label{appendix}
\section{Freezing Mode in REPS} \label{appendix:freezing_details}
Ideally, REPS should enter freezing mode only upon detecting a network failure. To achieve this, we employ two strategies:

\begin{enumerate}
    \item \textbf{Packet Trimming Support}: When packet trimming is available, distinguishing between packets lost due to congestion and those lost because of network failures becomes more straightforward. We use trimming to identify and separate lost packets with greater accuracy.
    \item \textbf{Absence of Packet Trimming}: In the absence of packet trimming, we analyze the maximum round-trip time (RTT) observed during a period preceding the timeout event. If the maximum RTT immediately before the timeout is high, it indicates that the packet was likely lost due to congestion. Conversely, if the maximum RTT was low, the packet was more likely lost due to a network failure.
\end{enumerate}

In our paper, we focused on scenarios where packet trimming was not supported. However, REPS performs optimally when packet trimming is available, benefiting from both an enhanced loss detection algorithm and a more responsive CC loop.

Regardless of the employed strategy, REPS maintains high performance even if it inadvertently enters freezing mode without an actual network failure. This is because entering freezing mode effectively reduces the EVS size of REPS. As demonstrated in Section~\ref{sec:evs_size}, REPS remains effective with as few as 32 EVs. For instance, we tested REPS with the 16 MiB permutation workload, running it first under normal conditions and then with forced freezing mode activated after $50\;\mu\text{s}$. The results show comparable results for REPS with or without forcing the freezing mode, as illustrated in Figure~\ref{fig:reps_freeze_stable} with only minor instability introduced by activating freezing mode. However, compared to OPS, REPS with the force freezing mode is still significantly more stable and has a finishing time comparable to the standard REPS, both completing the workload slightly faster than OPS.

As an extension of REPS, probing can be incorporated to enhance failure detection precision. However, we chose not to integrate this into REPS at this stage to maintain clarity and focus on the core framework. We also do not discuss fast loss-recovery mechanisms as they are anyway orthogonal to REPS behaviour and could be used to further improve its performance.

\begin{figure}[htbp]
    \centering
\includegraphics[width=\linewidth]{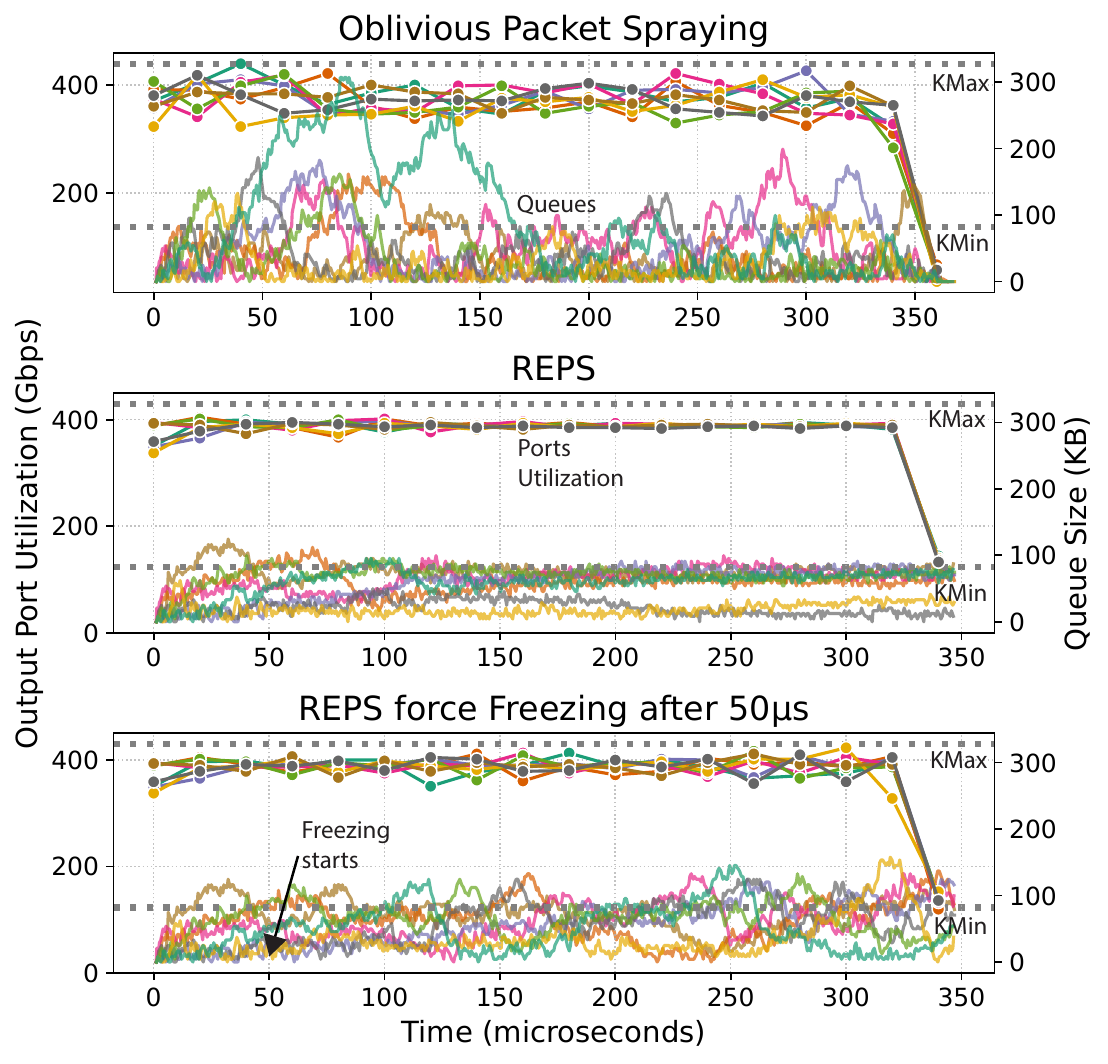}
    \caption{Studying REPS while forcing freezing mode after a fixed period of time.}
    \label{fig:reps_freeze_stable}
\end{figure}

\section{REPS Single Switch Convergence Proof}\label{appendix:proof}
We prove the main theorem that demonstrates the convergence of the recycled balls-into-bins process for a single switch.

The process has two distinct phases. In the first phase, there are still empty bins. The number of balls can still grow in this 'warmup' phase. We will bound the steps it takes to undergo the first phase and bound the maximum load. 
In the second phase, all bins are non-empty. Once a bin is non-empty, it remains non-empty. 
Moreover, once all bins are non-empty, the number of balls in the system remains constant, as for every ball removed a ball is thrown. In this second phase, we will show that the system converges to a state where all bins have at most $\tau$
balls by means of a drift theorem~\cite{DBLP:series/ncs/Lengler20}. Note that once all bins have at most $\tau$ balls and every ball has a distinct color, they will keep having at most $\tau$ balls. Moreover, we will argue about the maximum load and we will show that throughout, as long as the number of colors is at least $b \cdot n \geq (2.4 \ln n)n$, there is at most one ball of any given color with high probability.

Note that much of the proof is devoted to showing convergence of the second phase. If we set the threshold large enough, no bin ever overflows and essentially we only have the first phase, which converges in expected $n \ln n + n$ steps. Since queues are an expensive hardware resource, we are interested in showing good constant factor bounds for the queue lengths rather than giving a weaker asymptotic bound for $\tau$. 

\begin{proof}[Proof of Theorem \ref{thm:convergence}]

\textbf{First Phase.} We say a ball is fresh if it was thrown randomly. 
After $m= 2 n \ln n$ fresh balls are thrown, every bin received at least one fresh ball with probability $1-\frac{1}{n}$, by the coupon collector problem~\cite{DBLP:series/ncs/Doerr20}. 
Either all bins are non-empty before $m$ steps and we are done, or we have thrown at least $m$ fresh balls and all bins are non-empty. Either way, a bin receives at most $O(\log n)$ fresh balls, by balls into bins with $m+n$ balls~\cite{DBLP:conf/random/RaabS98}. Observe that the number of balls in a bin only increases if it receives a fresh ball. Hence, the number of received fresh balls  bounds the number of balls in a bin. 

Since $b > 2.4\ln n$, we have at most one ball per color when the first phase ends.

\textbf{Second Phase}
At the beginning of the second phase, we have at most $2 n \ln n + n$ balls in total and all bins are nonempty.
We define a potential function to measure the drift towards a completely load-balanced state. Let $X^{i}_t$ be the number of balls in bin $i$ at time step $t$ and let $Y^{i}_t = \max(0, X^{i}_t - \tau)$. Our potential is $Y_t=\sum_i Y^{i}_t$.  Clearly, when the potential is $0$, all bins have at most $\tau$ elements. We need to bound the drift $Y_{t+1} - Y_{t}$. 

We first consider the case where each ball has a unique color. The drift depends on the number of bins with more than $\tau$ elements, which we denote by $K_t$. Let us consider the expected drift for one particular bin $i$. The easy cases are:
\begin{align*}
E[Y^{i}_{t+1} - Y^{i}_t | X^{i}_t = x, K_t = k] = 
\begin{cases}
  \frac{k}{n} - 1 & \text{if } x > \tau, \\
  \frac{k}{n} & \text{if } x = \tau \\
  0 & \text{if } x \leq \tau - k, \\
\end{cases}
\end{align*}
If $x$ is in between $\tau-k$ and $\tau$, it gets interesting. The expected drift is still bounded by $\frac{n}{k}$, but this estimate is not strong enough overall. Instead, we need to show that the expected drift is very small for all such bins. 
To this end, we express the expected drift using a binomial random variable $B$ with parameters $\frac{1}{n}$ and $k$ as $E[\max(0, B- (\tau - x))]$. We cannot directly use linearity of expectation because of the maximum function. Instead, using total expectation leads to the following expression (for $z \geq 3$):
\begin{align*}
E[\max(0, B- z)] &= \left(E\left[ B \ | \ B \geq z\right] - z\right) P[B \geq z] \\
&\leq \frac{1}{z - 1} e^{-(z-1)}, 
\end{align*}
where we show the tail bound in Lemma \ref{lem:binomial}. 

We now compute the expected drift overall by considering each type of bin separately. 
By summing over all bins, the total expected drift contribution of bins with less than $\tau - \ln n$ balls is therefore bounded by $\frac{1}{ \ln n }$ (since $z \geq \ln n + 1$). Note that these bins only contribute to the drift if $k > \ln n$.
Next, consider the contribution from bins with at least $\tau - \ln n$ but at most $\tau$ balls. Since we have at most $2n \log n + n$ balls, of which at least $k (\tau + 1)$ balls are in overfull bins, there can be at most $\frac{2n \log n + n - k (\tau + 1)}{\tau - \ln n} \leq \frac{3}{4} n - k$ such bins. They each contribute at most $\frac{k}{n}$, so their total contribution is at most $\frac{3}{4}k - \frac{k^2}{n}$.
Finally, each overfull bin contributes $\frac{k}{n}-1$ to the expectation, for a total contribution of $\frac{k^2}{n} - k$. 
%
%
%
Hence, the total expected drift is 
\begin{align*}
E[Y_{t+1} - Y_t | K_t = k] \leq
\begin{cases}
  \frac{1}{\ln n} - \frac{k}{4}  & \text{if } k > \ln n, \\
  - \frac{1}{4}  & \text{else.} \\
\end{cases}
\end{align*}

Next, we bound the expected drift when there is more than one ball per color. We have two balls of the same color only if a ball does not get removed before the next ball of the same color is thrown. This happens if more than $b$ balls arrive in the same bin in the same time step as the ball (this uses the FiFO-property of the queues). By a tail bound on the binomial distribution, this happens with probability at most $ \frac{ e^{5/4} }{n^{3}}$, by Lemma \ref{lem:binomial-tail} using $b \geq \frac{12}{5}\ln n$. By a union bound, the probability that \emph{any} color in a batch has more than one ball is at most $\frac{ e^{5/4} }{n^{2}}$. The drift is bounded by $n$, so the contribution to the expected drift by this case is at most $\frac{e^{5/4}}{n}$.

Note that for $n \geq 16$,  $\frac{e^{5/4}}{n} + \frac{1}{\ln n} - \frac{1+\ln n}{4} \geq - \frac{1}{8}$ and $\frac{e^{5/4}}{n} - \frac{1}{4} \ge - \frac{1}{32}$. Hence, the  expected drift is at most $-\frac{1}{32}$. 
We conclude by an additive drift theorem~\cite{DBLP:series/ncs/Lengler20,DBLP:journals/nc/HeY04} that the expected time until all bins are below the threshold is at most $E[32 \cdot Y_0] = O(n \ln n)$. If all balls have unique colors at the end, the process converges then. This happens with probability $1-O(\frac{\ln n}{n})$. So with high probability, a single such $Y_t=0$ event suffices and the overall expectation is $O(n \ln n)$. 

Observe that the number of balls in the queues is bounded by $\tau$ plus the maximum load of a batched balls-into-bins process~\cite{DBLP:journals/algorithmica/BerenbrinkFKMNW18} with expected injection rate $\lambda=\frac{k}{n}\leq \frac{1}{2}$. Hence, the number of balls in a bin is $O(\log n)$ throughout with high probability. 

\end{proof}

We use the following tail bound:
\begin{lemma}\label{lem:binomial-tail}
Let ${B}$ be binomially distributed with parameters $\frac{1}{n}$ and $k\leq n$. Then, for any $x \geq 16$:
\[
P[B \geq x] \leq e^{-\frac{5}{4}(x-1)}
\]
\end{lemma}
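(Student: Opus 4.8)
The plan is to apply the standard exponential (Chernoff) bound for the upper tail of a binomial and then reduce the claim to an elementary one-variable inequality. Since $B \sim \mathrm{Bin}(k, 1/n)$ with $k \leq n$, its mean is $\mu = k/n \leq 1$. First I would bound the moment generating function: for any $t > 0$, $E[e^{tB}] = (1 + \tfrac{1}{n}(e^t - 1))^k \leq e^{\mu(e^t-1)}$, using $1 + y \leq e^y$. Markov's inequality applied to $e^{tB}$ then gives $P[B \geq x] \leq e^{-tx + \mu(e^t - 1)}$ for every $t > 0$; note that this holds for arbitrary real $x$, so no integer rounding is needed.

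Next I would optimize over $t$. Setting $e^t = x/\mu$ (valid since $x \geq 16 > 1 \geq \mu$, so the optimizing $t$ is positive) yields the clean bound $P[B \geq x] \leq (e\mu/x)^x e^{-\mu} \leq (e\mu/x)^x$. Because the base $e\mu/x$ is below $1$ on this range, the expression $(e\mu/x)^x$ is increasing in $\mu$, so I would substitute the worst case $\mu = 1$ to obtain $P[B \geq x] \leq (e/x)^x$. The remaining task is purely analytic: it suffices to show $(e/x)^x \leq e^{-\frac{5}{4}(x-1)}$ for all $x \geq 16$.

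Taking logarithms, this reduces to $x(1 - \ln x) \leq -\tfrac{5}{4}(x-1)$, equivalently $g(x) := x\left(\tfrac{9}{4} - \ln x\right) \leq \tfrac{5}{4}$. I would verify this directly: at $x = 16$ we have $\ln 16 > \tfrac{9}{4}$, hence $g(16) < 0 \leq \tfrac{5}{4}$, and since $g'(x) = \tfrac{5}{4} - \ln x < 0$ for all $x \geq e^{5/4} \approx 3.49$, the function $g$ is decreasing on $[16, \infty)$ and therefore stays below $\tfrac{5}{4}$ throughout.

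There is no serious obstacle here; the only points requiring care are checking the validity condition $x \geq \mu$ for the Chernoff optimization (immediate from $x \geq 16$) and confirming monotonicity in $\mu$ so that $\mu = 1$ is indeed the worst case. The final inequality $g(x) \leq \tfrac{5}{4}$ is slack already at $x = 16$ (where the left side is negative), reflecting that the threshold $16$ in the statement is comfortably conservative rather than tight.
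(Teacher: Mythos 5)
Your proof is correct, but it takes a genuinely different route from the paper's. The paper invokes an additive (Bernstein-form) Chernoff bound, $P[B \geq E[B]+\delta] \leq e^{-\delta^2/(2(E[B]+\delta/3))}$, as a cited black box, then uses $E[B] \leq 1$ and $\delta \geq 15$ to lower-bound the exponent by $\tfrac{5}{4}\delta$ and concludes via $\delta = x - E[B] \geq x-1$. You instead derive the multiplicative Chernoff bound $P[B \geq x] \leq (e\mu/x)^x$ from the moment generating function plus Markov's inequality, pass to the worst case $\mu = 1$ by monotonicity (immediate from $(e\mu/x)^x = (e/x)^x\mu^x$), and close with an elementary one-variable argument showing $(e/x)^x \leq e^{-\frac{5}{4}(x-1)}$ on $[16,\infty)$. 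All your steps check out: the optimizing choice $e^t = x/\mu$ is legitimate since $x \geq 16 > \mu$, and your function $g(x) = x\left(\tfrac{9}{4} - \ln x\right)$ is indeed negative at $x=16$ (as $\ln 16 \approx 2.77 > \tfrac{9}{4}$) and decreasing beyond $e^{5/4}$, so the target inequality holds with room to spare. What your route buys is self-containedness --- no appeal to a packaged concentration inequality --- and a strictly stronger intermediate bound $(e/x)^x = e^{-x(\ln x - 1)}$, which decays super-exponentially and makes explicit how conservative the constant $\tfrac{5}{4}$ and the threshold $16$ are. What the paper's route buys is brevity (two lines of algebra once Bernstein's inequality is quoted) and uniformity: the companion Lemma~\ref{lem:binomial} reuses the very same additive bound, so the paper's two tail lemmas share one template.
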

\begin{proof}
We use an additive Chernoff bound~\cite{im/1175266369}:
\begin{align*}
 P[B \geq E[B] + \delta] & \leq e^{-\frac{\delta^2}{2(E[B] + \delta / 3)}} \\
   & \leq e^{-\frac{\delta}{\frac{2}{\delta} + \frac{2}{3}}} & \\
   & \leq e^{-\frac{5}{4}\delta} & \text{using } \delta \geq 15
\end{align*}
\end{proof}

We use the following lemma to bound the expected drift of bins that are far below the threshold: 
\begin{lemma}\label{lem:binomial}
Let ${B}$ be binomially distributed with parameters $\frac{1}{n}$ and $k\leq n/2$. Then, for any $x \geq 7/2$:
\[
\left(E\left[ B \ | \ B \geq x\right] - x\right) P[B \geq x] \leq \frac{1}{x - 1} e^{-(x-\frac{1}{2})}
\]
\end{lemma}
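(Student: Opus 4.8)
\emph{The plan is to} treat the left-hand side of Lemma~\ref{lem:binomial} as a truncated expectation and bound it by direct summation, rather than through the Chernoff estimate of Lemma~\ref{lem:binomial-tail} (which only applies for $x \ge 16$, whereas here $x \ge 7/2$). The guiding observation is that for a binomial with mean $kp \le 1/2$ this truncated expectation is dominated by a single leading term that decays \emph{factorially}, hence far faster than the target $e^{-(x-1/2)}$. Writing $m = \lceil x\rceil$, so that $m \ge 4$ since $x \ge 7/2$, I would first record the identity
\[
\bigl(E[B \mid B \ge x] - x\bigr)\,P[B \ge x]
= E\!\bigl[(B-x)\,\mathbf{1}\{B \ge x\}\bigr]
= \sum_{j \ge m}(j-x)\,P[B = j],
\]
which holds because $B$ is integer-valued, so $\{B \ge x\} = \{B \ge m\}$ and $B - x \ge 0$ on that event.

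Next I would bound the sum termwise. Using $P[B=j] = \binom{k}{j}\,n^{-j}(1-\tfrac1n)^{k-j} \le \frac{(kp)^j}{j!} \le \frac{(1/2)^j}{j!}$, together with $j - x \le (j-m)+1$ valid for $j \ge m$, and then $\frac{1}{(m+i)!} \le \frac{1}{m!\,(m+1)^i}$, the tail collapses into a geometric-type series. Summing $\sum_{i\ge 0}(i+1)s^i = (1-s)^{-2}$ with $s = \tfrac{1}{2(m+1)}$ yields
\[
\sum_{j\ge m}(j-x)\,P[B=j] \;\le\; c_m\,\frac{(1/2)^m}{m!},
\qquad
c_m := \Bigl(1-\tfrac{1}{2(m+1)}\Bigr)^{-2} \le \tfrac{100}{81}\ \ (m \ge 4).
\]
This is the crux: everything afterwards is a comparison of the explicit quantity $c_m (1/2)^m/m!$ against the target.

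Finally I would discharge that comparison. Since $\tfrac{1}{x-1}e^{-(x-1/2)}$ is decreasing in $x$ on $(m-1,m]$, it is minimized at $x = m$, so it suffices to show $c_m \frac{(1/2)^m}{m!} \le \frac{1}{m-1}e^{-(m-1/2)}$. Rearranging, this reads $g(m) \le e^{1/2}/c_m$ with $g(m) := (m-1)\,(e/2)^m/m!$. A one-line ratio computation gives $g(m+1)/g(m) = \frac{m}{m^2-1}\cdot\frac{e}{2} < 1$ for $m \ge 4$, so $g$ is decreasing; the single base case $g(4) \approx 0.43$ already lies below $e^{1/2}/c_4 \approx 1.33$, and monotonicity then extends the bound to all $m \ge 4$.

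The argument is conceptually easy — factorial decay dominates the exponential target with enormous slack for large $m$ — so \emph{the main obstacle is purely the bookkeeping at the boundary}. Because the statement must hold down to $x = 7/2$ (i.e. $m = 4$), where the factorial advantage is still modest, the constant $c_m$ must be kept explicit rather than hidden in $O(\cdot)$ notation, and the base case verified numerically. A second point requiring care is the ceiling $m = \lceil x\rceil$: it must be handled consistently both when passing from $\{B \ge x\}$ to $\{B \ge m\}$ and when reducing the continuous parameter $x$ to the integer $m$ via the monotonicity of the target.
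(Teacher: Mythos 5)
Your proof is correct, but it takes a genuinely different route from the paper's. The paper bounds the two factors of the product separately and then multiplies: it invokes a result of Pelekis on the conditional expectation of a binomial to get $E[B \mid B \ge x] - x \le \frac{1}{x-1}$, and an additive Chernoff bound (as in the companion tail lemma, but re-derived with $E[B]\le \tfrac12$ and $\delta \ge 3$ so that it applies down to $x \ge 7/2$) to get $P[B \ge x] \le e^{-(x-1/2)}$; the two factors of the claimed bound then appear one from each ingredient. You instead collapse the product into the single truncated expectation $E[(B-x)\mathbf{1}\{B\ge x\}]$ and sum the tail directly using the elementary estimate $P[B=j] \le \frac{(1/2)^j}{j!}$, a geometric-type series, and an explicit comparison of $c_m\frac{(1/2)^m}{m!}$ against the target via monotonicity in $m$ plus a numerical base case at $m=4$. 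Your argument is fully self-contained (no Pelekis, no Chernoff) and in fact yields a much stronger, factorially decaying bound, at the cost of explicit constant bookkeeping and a case check at the boundary; the paper's version is shorter on the page and produces exactly the stated form of the bound, but leans on an external reference for the conditional-expectation estimate. Both are valid; the only point worth stating explicitly in your write-up is the degenerate case $P[B\ge x]=0$ (when $x>k$), where the left-hand side is interpreted as $0$ and the inequality is trivial.
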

\begin{proof}

We use a bound from Pelekis~\cite{pelekis2017lowerboundsbinomialpoisson} for $E\left[ B \ | \ B \geq x\right]$. By their theorem on this conditional expectation:
\begin{align*}
E\left[ B \ | \ B \geq x\right]  &\leq x + \frac{(n-x)\frac{1}{n}}{x - \frac{k}{n} + \frac{1}{n}} \leq x + \frac{1}{x - 1} \enspace .
\end{align*}
\noindent
We bound $P[B \geq x]$ similarly as in Lemma \ref{lem:binomial-tail}:
\begin{align*}
 P[B \geq E[B] + \delta] & \leq e^{-\frac{\delta^2}{2(E[B] + \delta / 3)}} \\
   & \leq e^{-\frac{\delta}{\frac{1}{\delta} + \frac{2}{3}}} & \text{using } E[B] \leq \frac{1}{2} \\
   & \leq e^{-\delta} & \text{using } \delta \geq 3
\end{align*}
Because $E[B] \leq \frac{1}{2}$, $P[B \geq x] \leq  e^{-(x-\frac{1}{2})}$.
\end{proof}

\section{Additional Results}\label{appendix:additional_results}

This section presents additional results that could not be included in the main body of the paper.

\subsection{ACK Coalescing Theoretical Modeling}\label{appendix:additional_results:ack_theoretical}
To assess REPS performance under different ACK coalescing ratios, we validate it using the theoretical model from Section~\ref{sec:rec_theory}.

\begin{figure}[htbp]
    \centering
    \includegraphics[width=\linewidth]{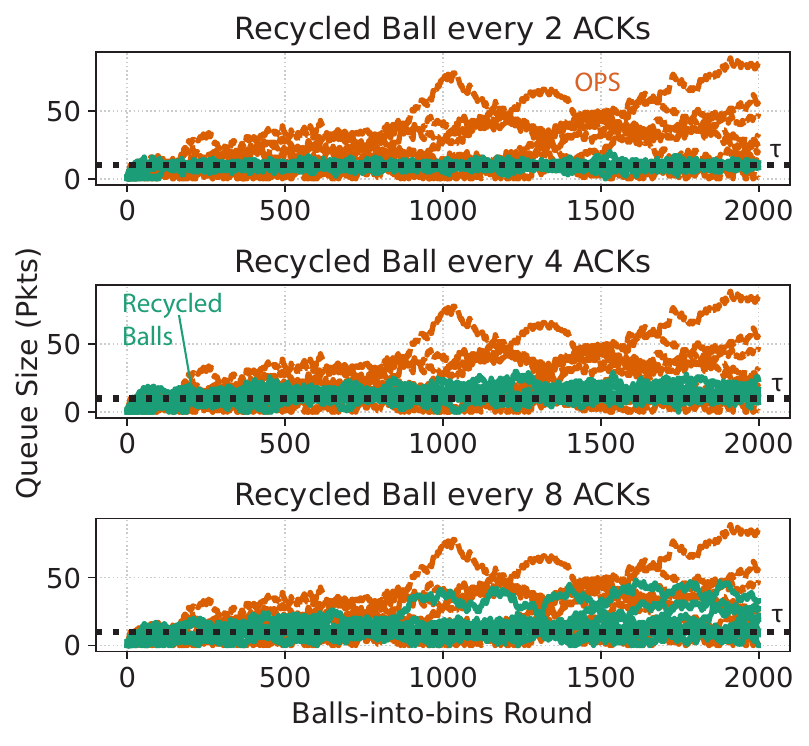}
    \caption{\changed{Performance with different ACK coalescing ratios using the balls-into-bins models.}}
    \label{fig:recycled_balls_compression}
    \vspace{-0.5em}
\end{figure}

Figure~\ref{fig:recycled_balls_compression} shows how the recycled balls model performs well, even with less frequent recycling. While we lose the guarantee of consistently staying below $\tau$, with 2:1 and 4:1 recycling ratios, the queues barely exceed this threshold. An 8:1 ratio still proves slightly more advantageous than OPS.

\subsection{Different Tiers} \label{appendix:additional_results:tiers}

We aim to verify that REPS performs effectively with fat-tree topologies that have three tiers. This scenario poses a slightly greater challenge for REPS, as a single EV must manage two hops. Nonetheless, there is no intrinsic reason why REPS should not perform well in such a topology. 

To validate this, we execute the synthetic benchmark using the symmetric topology described in Section~\ref{sec:eval:nonos}, but with three tiers instead of two. The results, shown in Figure~\ref{fig:3tiers}, indicate that REPS performs comparably to the two-tier topology.

\begin{figure}[htbp]
    \centering
\includegraphics[width=\linewidth]{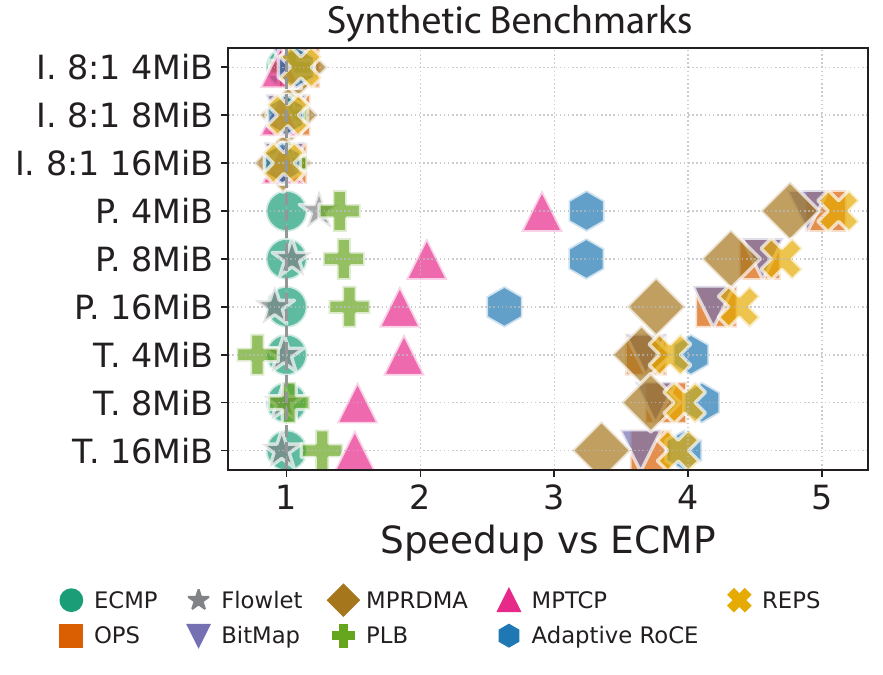}
    \caption{Studying REPS with a 3 tiers fat tree topology.}
    \label{fig:3tiers}
    \vspace{-0.5em}
\end{figure}

\subsection{Incremental Failures} \label{appendix:additional_results:incr_fail}
To further demonstrate the resilience of REPS under failures, we conducted an experiment where we incrementally failed all but one of a switch's uplinks at 200 µs intervals. Figure~\ref{fig:incremental_fail} presents the permutation from the perspective of the failing switch, where three of the four uplinks were permanently disabled in a staggered manner. As expected, REPS enters freezing mode immediately after the first failure, ensuring that failing output ports are avoided. Notably, small utilization spikes are observed on the failing links when REPS exits freezing mode to verify if the failure has been resolved. Since the failures are permanent for the duration of the experiment, REPS promptly re-enters freezing mode after detecting unresolved issues.

In this drastic scenario, OPS performs 40× worse than REPS, primarily due to its inability to avoid broken links, resulting in numerous retransmissions and reduced congestion windows.

\begin{figure}[t]
    \centering
    \includegraphics[width=\linewidth]{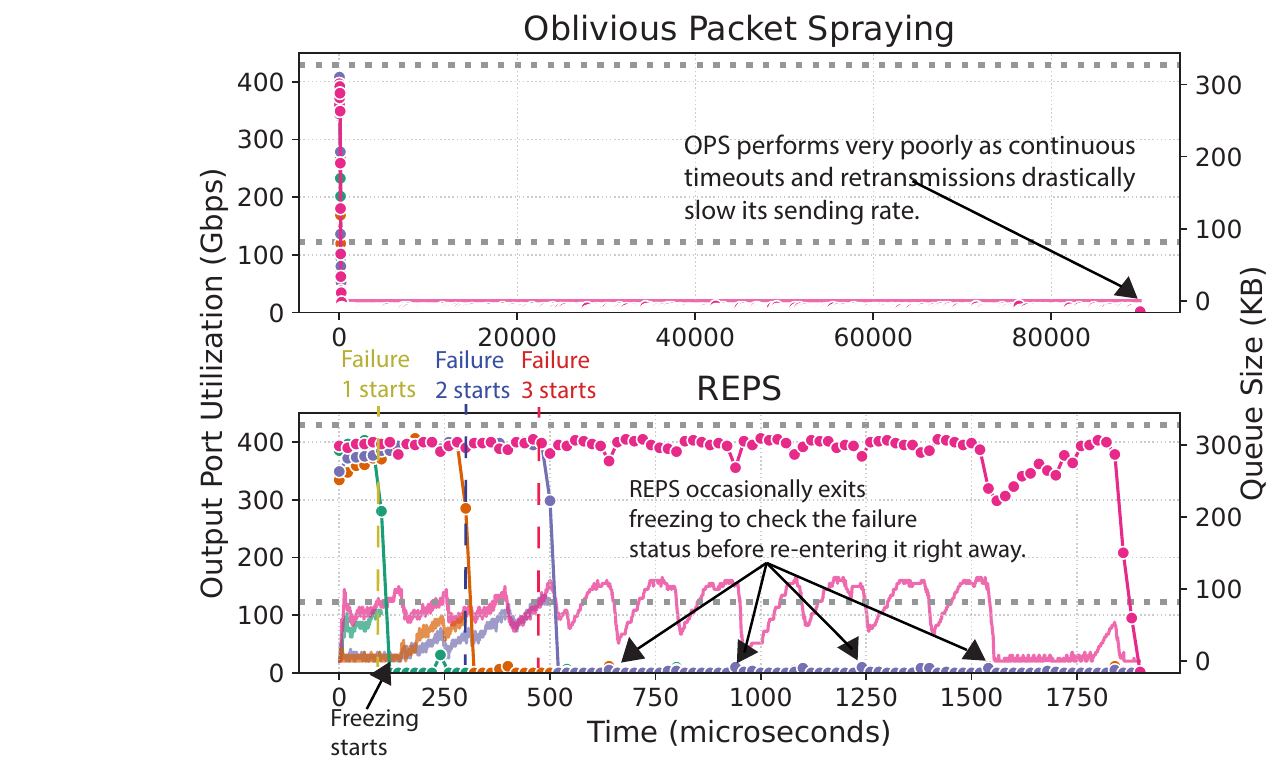}
    \caption{\changed{REPS vs. OPS in a 32 MiB permutation with incremental persistent failures.}}
    \label{fig:incremental_fail}
    \vspace{-0.5em}
\end{figure}

\subsection{Impact of Freezing Mode} \label{appendix:additional_results:freezing}
To demonstrate the impact of having freezing mode in REPS, we evaluate three different scenarios with and without freezing mode. In Figure~\ref{fig:freezing_impact}, we show the effect of having freezing mode available versus disabling it. As expected, in scenarios without failures, REPS performs the same with or without freezing mode. However, when adding failures in the form of 1\% cables failure, we can see freezing mode helping, with almost a 25\% gain over not having it. However, we highlight that REPS remains competitive even when running without freezing mode. This could be used to simplify deployments if needed.

\begin{figure}[htbp]
    \centering
    \includegraphics[width=\linewidth]{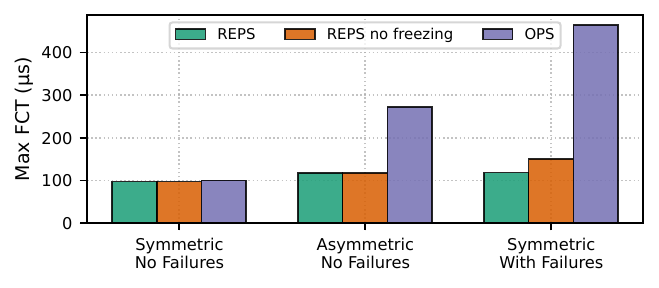}
    \caption{Impact on runtime of disabling freezing mode in REPS.}
    \label{fig:freezing_impact}
    \vspace{-0.5em}
\end{figure}

\section{Additional Data}\label{appendix:additional_data}
The datacenter traces used throughout this paper have been previously used in a number of similar works \cite{201562, 10.1145/1851275.1851192}. In particular we use traces provided by Facebook and a series of traces used for web search in production clusters. 
The CDF distribution for such traces can be seen in Figure~\ref{fig:distribution}. For most of the paper we focus exclusively on the WebSearch traces.
\begin{figure}[htbp]
    \centering
\includegraphics[width=\linewidth]{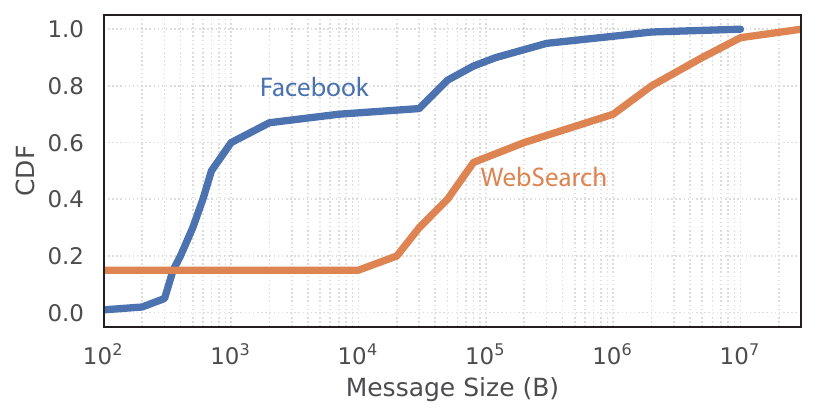}
    \caption{CDF for different data center traces}
    \label{fig:distribution}
    \vspace{-0.5em}
\end{figure}

\clearpage
\section{Artifact Appendix}

\subsection{Abstract}
This appendix provides concise instructions for installing and executing the REPS artifact. Each figure is generated by a dedicated Python script to ensure reproducibility. We also include Bash scripts to run the full experiment suite or any user-selected subset. Note that the AD was prepared earlier than the camera-ready paper, so some figure numbers may mismatch the camera-ready version.

\subsection{Description \& Requirements}

\subsubsection{How to access}
We provide a link to a GitHub repository and a link to a Zenodo DOI:
\begin{itemize}
    \item GitHub: \url{https://github.com/tommasobo/REPS_EuroSys_Artifact}
    \item Zenodo: \url{https://doi.org/10.5281/zenodo.17054005}
\end{itemize}

In the content of the repository we also offer instructions inside the \texttt{README}.

\subsubsection{Hardware dependencies}
No specialized hardware is required. All experiments were performed on a commodity x86~CPU. Any similar processor capable of running the simulator should suffice to reproduce the results. For our runs we used 40GB of RAM, but we expect the simulations to run also with 32GB of RAM. 16GB of RAM might also work but might require lowering the number of parallel simulations.

\subsubsection{Software dependencies}
The artifact requires \texttt{C++17} and \texttt{Python3.8} as the minimum dependencies for building and running both the \texttt{htsim} simulator and the relative Python scripts. All experiments and computational artifacts in this paper were run locally on Ubuntu~22.04~LTS using WSL~2.

To process and visualize the raw data, we rely on a set of standard Python packages, \textit{i.e.,} \texttt{seaborn}, \texttt{scipy}, \texttt{numpy}, and \texttt{pandas}. We do provide a \texttt{requirements.txt} that users can use to install all needed Python dependencies.

\subsubsection{Benchmarks} 
All the benchmarks are self contained and do not depend on external data.

\subsection{Set-up}
After obtaining the source code, either via \texttt{git clone} or by downloading and extracting the Zenodo archive, install the required packages by running:
\begin{itemize}
\item \texttt{./reps\_pkg\_install.sh}
\end{itemize}

If you encounter errors or want to avoid conflicts with local packages, run these commands inside a clean Python environment created with \texttt{venv}. For example:
\begin{itemize}
\item \texttt{python3 -m venv .venv}
\item \texttt{source .venv/bin/activate}
\end{itemize}

Then, from the \texttt{./htsim/sim} directory, install \texttt{htsim} with:
\begin{itemize}
\item
\begin{verbatim}
make clean && cd datacenter/ && make clean && 
cd .. && make -j 8 && cd datacenter/ && make -j 8 && 
cd ..
\end{verbatim}
\end{itemize}

\subsection{Evaluation workflow}

\subsubsection{Major Claims}
Our main claim of the paper is that:

\begin{itemize}
    \item \textit{(C1): REPS consistently matches or exceeds the performance of modern load-balancing schemes. On symmetric networks, it always slightly outperforms OPS; under failures or asymmetric conditions, it delivers clear gains over OPS and other adaptive approaches.}
\end{itemize}

\subsubsection{Experiments}
Please note that minor discrepancies from the submitted version may occur due to hardware or software differences and inherent randomness, despite our efforts to fix all random seeds. Additionally, some plots were manually annotated and enhanced during post-processing. However, for most figures we expect an exact match. With that in mind, running and analyzing the code is straightforward and explained in the next paragraph.

\textit{[Execution]}
We provide in the \texttt{/artifact\_scripts} folder, a number of Python scripts that will run the code needed for their respective figures. For example, to generate Figure 1, one must run the \texttt{fig\_1\_symmetric\_micro.py} Python script.

To facilitate the process, we provide three Bash scripts that can run most or all experiments needed to reproduce the results of the paper. In particular:
\begin{itemize}
  \item \texttt{reps\_quick.sh}: runs quickly (less than 2 hours) and generates a small subset of Figures.
  \item \texttt{reps\_medium.sh}: runs in about 6 hours and generates almost all Figures of the paper.
  \item \texttt{reps\_full.sh}: runs all experiments in the paper and may take up to a day.
\end{itemize}

Exact running times depend on hardware performance.

~\\
\textit{[Results]}
Once experiments are run, results are automatically generated in the \texttt{artifact\_results/} directory.
Within this directory, each experiment has its own subfolder containing the plots and, when applicable, the \texttt{raw\_data/}. 

For instance, to visualize the first Figure, one must navigate to \texttt{artifact\_results/fig\_1\_symmetric\_micro/plots} and open its contents. In most cases we provide the plots in both \texttt{.pdf} and \texttt{.png} format.

\end{document}